\documentclass[a4paper,preprint]{elsarticle}
%\documentclass[preprint]{elsarticle}
%\documentclass[a4]{llncs}
%\pagestyle{plain}

%\usepackage{lineno}
%    \linenumbers

% All theorem,lemma,property,proof,etc. environments defined in llncs.cls

% For underline,strikethrough,etc.
\PassOptionsToPackage{normalem}{ulem}
\usepackage{ulem}

% Algorithms and graphics packages
\usepackage{algorithm}
\usepackage{algpseudocode}
\usepackage{graphicx}
\usepackage{epsfig}
\usepackage{caption}
\usepackage{subcaption}
\usepackage{amsmath}
\usepackage{xparse}
\usepackage{amssymb}
\usepackage{tcolorbox}
\usepackage{amsthm}
\usepackage{graphicx}

\usepackage{enumitem}

\newtheorem{definition}{Definition}

\newtheorem{lemma}{Lemma}
\newtheorem{theorem}{Theorem}

%\renewcommand\qedsymbol{}

% Add hyperlink support and colors for hyperlinks
%\usepackage[unicode=true,pdfusetitle,
 %bookmarks=true,bookmarksnumbered=false,bookmarksopen=false,
 %breaklinks=false,pdfborder={0 0 1},backref=false,colorlinks=false]
 %{hyperref}
%\hypersetup{linkcolor=blue}

\newcommand{\calS}{\mathcal{S}}
\newcommand{\calP}{\mathcal{P}}
\newcommand{\calA}{\mathcal{A}}

\newcommand{\Thetaopt}{\Theta_{\rm OPT}}
\newcommand{\Thetakopt}{\Theta_{{\rm k-OPT}}}
\newcommand{\Ropt}{R_{\rm OPT}}
\newcommand{\Rmax}{R_{\max}}

\newcommand{\barr}{{\bar r}}

\newcommand{\emptyf}{f(\cdot,\,\cdot)}
\newcommand{\emptyr}{r(\cdot,\,\cdot)}
\newcommand{\emptybarr}{\bar r(\cdot,\,\cdot)}
\newcommand{\emptyrp}{r'(\cdot,\,\cdot)}
\newcommand{\emptytheta}{\Theta(\cdot,\,\cdot:s)}
\usepackage[colorinlistoftodos,textsize=tiny,textwidth=2cm,color=red!25!white,obeyFinal]{todonotes}

%\begin{figure}[t]
%	\centering
%\includegraphics[width=3in]{***.pdf}
%\caption{DO}
%	\label{fig:***}
%\end{figure}

\title{Minmax-Regret $k$-Sink Location on a Dynamic Tree Network with Uniform Capacities}

\author[HKUST]{Mordecai Golin \fnref{fn1}}
\address[HKUST]{UMass Amherst.  mgolin@umass.edu}

\author[CMU]{Sai Sandeep \fnref{fn1}}
\address[CMU]{University of California, Berkeley. saisandeep192@gmail.com}
\fntext[fn1]{Work partially done while at the Hong Kong University of Science and Technology and partially supported by Hong Kong RGC CERG  Grant	16208415}
\date{}

%keywords
% Tree Partitioning
% Sink Evacuation
% MInmax-Regret
% k-Center
% Confluent Flows

\begin{document}

\begin{abstract}
A {\em dynamic flow network $G$}  with uniform {\em capacity} $c$ is a graph in which at most $c$ units of flow can enter an edge in one time unit. If flow enters a vertex faster than it can leave,  {\em congestion} occurs.

The evacuation problem is to evacuate all flow to  sinks. The $k$-sink location problem is to place $k$-sinks so as to minimize this evacuation time. 
A flow is {\em confluent} if all flow passing through a particular vertex must follow the same exit edge.  It is known that the confluent $1$-sink  location problem on a graph with $n$ nodes  is NP-Hard to approximate even within an $\Theta(\log n)$ factor.
%This differentiates it from the $1$-center problem on static graphs, which it extends, which is polynomial time solvable.

The   $k$-sink  location problem restricted to trees, which partitions the tree into $k$ subtrees each containing  a sink, is polynomial solvable in  $\tilde O(k^2 n)$ time.

The concept of {\em minmax-regret} arises  from robust optimization.  
%Initial flow values on sources are unknown. 
%Instead, for each source, a range of possible flow values is provided and any scenario with flow values in those ranges might occur.  
For each source, a range of possible flow values is provided and any scenario with flow values in those ranges might occur.  
The goal is to find a sink placement  that minimizes, over all possible scenarios, the difference between the evacuation time  to those sinks    and the minimal evacuation time of that scenario

Minmax-Regret $k$-Sink Location on  Dynamic {\em Path}  Networks with uniform capacities is polynomial solvable in $n$ and $k$.   Similarly, the Minmax-Regret $k$-{\em center} problem on trees is polynomial solvable in $n$ and $k$.  Prior to this work, polynomial time solutions for Minmax-Regret $k$-Sink Location on Dynamic Tree Networks with uniform capacities were only known for $k=1$.  This paper gives a 
 $$O\Bigl( \max(k^2 \log^2 k,\log ^2n)\,  k^4 n^2  \log^5 n\Bigr)$$ 
%$$O\Bigl( \max(k^2,\log ^2n)\,  k^2 n^2  \log^5 n\Bigr)$$ 
time solution to the problem.  The algorithm works for both the discrete case, in which  sinks are constrained to be vertices, and the continuous case, in which sinks may fall on edges as well.

\end{abstract}

%keywords
% Tree Partitioning
% Sink Evacuation
% MInmax-Regret
% $k$-Center
% Confluent Flows

\begin{keyword}
 Tree Partitioning \sep
Sink Evacuation    \sep
 Minmax-Regret   \sep
$k$-Center   \sep
Confluent Flows
\end{keyword}

\maketitle
% Introduction part including overview of the paper, past work and motivation
\section{Introduction}
\label{sec:Intro}
%\section{Introduction}
%\label{sec:Intro}
%\input{Introduction.tex}

{\em Dynamic flow networks}   were introduced by Ford and Fulkerson in 
\cite{Ford1958a}    to  model movement of items on a graph.  Each vertex in the graph is assigned some initial set of flow (supplies)  $w_v;$ if $w_v >0$ the vertex is a {\em source}. 
Each graph edge  $e=(u,v)$ has an associated  length $d(u,v)$,  which is the time  required to traverse the edge and a capacity $c_e,$ which is the  {\em rate} at which items can enter the edge. If  $c_e\equiv c$  for all edges $e,$ the network has {\em uniform capacity}.  A major difference between dynamic and static flows is that,  in dynamic flows, as flow moves around the graph, 
 {\em congestion} can occur as supplies back up at a vertex.

A large literature on such flows exist.
Good surveys of the problem and applications can be found in
\cite{Aronson1989,Fleischer2007,higashikawa2019survey,Skutella2009}. 
With only one source and one sink the problem of moving  flow as quickly as possible along one path from the source to the sink  is known as the {\em Quickest Path problem} and has  a long history \cite{Pascoal2006}.
A natural generalization  is the transshipment  problem, e.g., \cite{Hoppe2000b}, in which  the graph has several sources and sinks, with  supplies on  the sources and each sink having a specified demand.  The problem is  to find the quickest time required to satisfy all of the demands.  \cite{Hoppe2000b} provides a polynomial time algorithm for the transshipment  problem with later improvements by \cite{fleischer1998efficient}.

Dynamic Flows also model  {\em evacuation problems} \cite{Higashikawa2014}.  Vertices  represent rooms, flow represent people, edges represent hallways and sinks are exits out of the building.The  problem is to find a routing strategy (evacuation plan) that evacuates everyone  to the sinks   in minimum time. 
All flow passing through a vertex  is constrained to evacuate out through a single edge specified  by the plan (corresponding to a sign at that vertex stating ``this way out'').
Such a flow, in which  all flow through a vertex leaves through the same edge, is known as {\em confluent}\footnote{Confluent flows occur naturally in problems other than evacuations, e.g.,  packet forwarding and railway scheduling \cite{Dressler2010b}.}.  In general, confluent flows are difficult to construct \cite{Chen2007,Chen2006,Dressler2010b,Shepherd2015}.
   If P $\not=$ NP, then, even in a static graph,  it is  impossible to construct a constant-factor approximate optimal confluent flow in polynomial time   even with only one sink.

Returning to evacuation problems on dynamic flow graphs,  the basic optimization question is  to determine  a plan that minimizes the total time\footnote{There is a separate {\em minsum}  literature that  attempts to minimize the {\em average} time needed  for evacuation. In the same way that total time  can be viewed as a generalization of $k$-center,  average-time can be viewed as a generalization of $k$-median.  See, e.g., \cite{benkoczi2020minsum,bhattacharya2018n,higashikawa2021almost,manna2024minsum} for more details.} needed to evacuate all the people.   This differs from the transshipment problem in that even though sources have fixed supplies (the number of people  to be evacuated) sinks do not have fixed demands. They can accept as much flow as arrives there.  Note that  single-source single-sink confluent flow problem is exactly  the polynomially solvable   quickest path problem \cite{Pascoal2006} mentioned earlier.

Observe that if  edge capacities are  ``large enough'',   congestion can never occur and   flow starting at a vertex will always evacuate to its closest sink. In this case the {\em $k$-sink location problem} -- finding the location of $k$ sinks that minimize total evacuation time --reduces to the unweighted {\em  $k$-center problem.}  Although the unweighted $k$-center problem is 
 NP-Hard \cite[ND50]{garey1979computers} in $n $ {\em and} $k$ it is polynomially-time solvable for fixed $k.$  In contrast, Kamiyama {\em et al.}~\cite{Kamiyama} proves 
 by reduction to {\em Partition},   that, even the $1$-sink evacuation problem with the sink known in advance  is   NP-Hard for general graphs.
 By modifying similar results for static confluent graphs,  \cite{Golin2017sink}  extended this to show that even for  $k=1$ and the sink  location fixed in advance, it is still impossible to approximate the evacuation time to within a factor of $O(\log n)$ if P $\not=$ NP.

Research on finding {\em exact} quickest confluent dynamic flows is therefore restricted to special graphs, such as trees and paths.
 \cite{BGYKK2017} solves the $k$-sink location problem for  paths with uniform capacities  in
$\min\bigl(O(n + k^2   \log^2  n),\ O(n \log n)\bigr)$ time and  for paths with general capacities in 
$\min\bigl(O(n \log n + k^2  \log^4  n),\, O(n \log^3  n)\bigr)$ time.
 \cite{Mamada2006} gives an $O(n \log^2 n)$ algorithm for solving the
 $1$-sink problem on a dynamic tree network. \cite{Higashikawaa} improves this down to $O(n \log n)$ for uniform capacities. \cite{chengolin2016} gave an 
  $O(n k^2 \log^5 n)$ for the $k$-sink location problem on trees which they  later reduced down to  $O\Bigl( \max(k,\log n) k n \log^4 n\Bigr )$ time in \cite{chengolin2018}.
  These last two results were for general capacity edges.  They can both be reduced by a factor of $\log n$ for the uniform capacity version.

%Our arguments will  often use ``$\argmax$''.  
%In the case of {\em ties},  ``$\argmax$''  returns the smallest index that achieves the maximum,  i.e., 
%$$\argmax_{ i \in {\cal I}}  f(i)  = \min \left\{ j\,:\,  f(j)  = \max_{i \in {\cal I}}   f(i)\right\}.$$
%
%

In {\em robust optimization},  the exact amount of flow located at a source is unknown at the time the evacuation plan is drawn up.  One approach to  robust optimization is  to assume that, for each source,  only  an (interval) {\em  range} within which that amount may fall is known. One method  to  deal with this type of uncertainty is to find a plan that minimizes the {\em regret}, e.g. the  maximum discrepancy between the evacuation time for the given plan on a fixed input and the plan that would give the minimum evacuation time for that particular input.  This is known as the {\em minmax-regret} problem. Minmax-regret optimization has been extensively studied for the
 $k$-median  \cite{Bhattacharya2013,conf/cocoon/BhattacharyaK12,Brodal2008,Yu2008}
and $k$-center problems \cite{Averbakh1997,Bhattacharya2014a,Puerto2008,Yu2008} 
% (\cite{conf/cocoon/BhattacharyaK12} is a recent example)
 as well as many other optimization problems  \cite{Conde2008,Puerto2011,Ye2015}.
 \cite{aissi2009min,averbakh2004interval,Candia-Vejar2011,Kouvelis1997} provide an introduction to the literature.  Since most of these problems are NP-Hard to solve exactly on general graphs, the vast majority of the literature concentrates on  algorithms for  special graphs, in particular paths and trees.
 In particular, for later comparison, since the $k$-center problem is a special case of the $k$-sink location problem, we note that the minmax-regret $k$-center problem on trees can be solved in $O(n^2 \log^2 n \log \log n)$ time  \cite{Averbakh1997}.

  Recently there has been a series of new results for minmax-regret $k$-{\em sink} evacuation on special structure dynamic graphs, mostly with  uniform capacities. The $1$-sink minmax-regret problem on  a uniform capacity path was originally proposed by   \cite{ChengHKNSX13} who gave an $O(n \log^2 n)$ algorithm.
This was reduced down to $O(n \log n)$ by \cite{Higashikawa2014a,Wang2013,Wang2014b} and then to $O(n)$ by  \cite{Bhattacharya2015}. For $k=2$   \cite{Li2014}  gave  an $O(n^3 \log n)$ algorithm, later  reduced to $O(n \log^4 n)$ by
\cite{Bhattacharya2015}.  For general $k,$    \cite{AAGS14}  provides two algorithms. The first runs in   $O(nk^2 \log^{k+1}n)$ time, the second in     $O(n^3  \log n)$ time.
\cite{golin2022minmax}   describes how to solve the $k$-sink minmax-regret problem on  a {\em general}  capacity paths in  $O(n^4  \log n)$ time.

Xu and Li  solve the $1$-sink min-max regret problem on  a uniform capacity cycle in $O(n^3 \log n)$ time \cite{xu2015minimax}.  For   the $k$-sink min-max regret problem on  a uniform capacity tree  the only result known previously was for $k=1$.  
 \cite{Higashikawa2014} provides an $O(n \log^2 n)$ algorithm which was reduced to $O(n \log n)$ by \cite{Bhattacharya2015}.

No results for $k > 1$ were previously known.  This paper derives a
 $$O\Bigl( \max(k^2 \log^2 k,\log ^2n)\,  k^4 n^2  \log^5 n\Bigr)$$
%$$O\Bigl( \max(k^2,\log ^2n)\,  k^2 n^2  \log^5 n\Bigr)$$ 
algorithm for the problem.  We note that, similar to the $k$-center problem, there are two different variations of the $k$-sink location problem, a {\em discrete}  version and a {\em continuous} version (\cite{chen2015efficient} provides  a   discussion of the history). The discrete version  requires all sinks to be on vertices;  the continuous version permits sinks to be placed on edges as well. Our result holds for both versions.

Our algorithm will work by showing how to recast the minmax-regret $k$-sink location problem, which originally appears to be attempting to minimize a {\em global}  function of the tree, into a minmax {\em tree-partitioning} problem utilizing purely {\em local} functions on the subtrees.  It will then apply a new generic partitioning scheme developed in \cite{chengolin2016,chengolin2018}.\footnote{The scheme was introduced in   \cite{chengolin2016} but then generalized and extended to the continuous case in \cite{chengolin2018}. Going forward, we will therefore only reference \cite{chengolin2018}.}

Section  \ref{sec:minmax} introduces the tree partitioning framework of \cite{chengolin2018} and shows how sink evacuation fits into that framework.
Section \ref{sec:model} introduces the formalism of the regret problem.

Sections \ref{sec:wcs}  and \ref{sec:local-max} are the  new major technical contributions of this paper.
Section \ref{sec:wcs}  proves that, given a fixed partition of the input tree into subtrees, there are only a linear number of possible worst-case scenarios that achieve the minmax-regret for that partition.
Section \ref{sec:local-max} uses this fact to define a new {\em local} regret function on subtrees and then proves that solving the $k$-sink locaton problem using this new local regret function will solve the global regret problem.

Section \ref{sec:alg}  combines  the results of the previous sections, inserts them into the framework of \cite{chengolin2018} and shows that this immediately implies
\begin{theorem}
\label{thm:minmax}
The minmax-regret $k$-sink evacuation problem on trees can be solved in time
 $$O\Bigl( \max(k^2 \log^2 k,\log ^2n)\,  k^4 n^2  \log^5 n\Bigr).$$
%$$O\Bigl( \max(k^2,\log ^2n)\,  k^2 n^2  \log^5 n\Bigr).$$ 
This result holds for both the discrete and continuous  versions of the problem.
\end{theorem}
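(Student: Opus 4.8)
The plan is to assemble Theorem~\ref{thm:minmax} from the three preceding ingredients rather than to attack the global regret objective directly. First I would invoke the reformulation of Sections~\ref{sec:wcs} and~\ref{sec:local-max}: once we fix a partition $\mathcal{P}$ of the input tree into $k$ subtrees, each hosting one sink, the global minmax-regret achieved by $\mathcal{P}$ equals $\max_{T' \in \mathcal{P}} R(T')$, where $R$ is the \emph{local} regret function on subtrees introduced in Section~\ref{sec:local-max}. The crucial enabling fact, from Section~\ref{sec:wcs}, is that only $O(n)$ candidate worst-case scenarios need be considered when evaluating the regret of any fixed partition, so $R(T')$ is both well-defined and, more importantly, computable. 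This turns a search over a global function of the whole tree into a minmax \emph{tree-partitioning} problem with a purely local cost.

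Second, I would verify that $R$ satisfies the interface required by the tree-partitioning scheme of \cite{chengolin2018} — in particular that $R$ depends only on the subtree (together with its attachment point to the rest of the tree), behaves monotonically under subtree containment, and admits an efficient feasibility oracle: given a threshold $\lambda$, decide whether a subtree $T'$ has $R(T') \le \lambda$. Section~\ref{sec:local-max} carries out this work; what must be extracted here is the per-query time $\tau$. Evaluating $R(T')$ amounts to looping over the $O(n)$ worst-case scenarios supplied by Section~\ref{sec:wcs} and, for each, computing an optimal within-subtree evacuation time; using the known near-linear $1$-sink tree evacuation machinery for uniform capacities together with the preprocessing already built for the partitioning framework, this costs roughly $\tilde O(n)$ per query, with the residual $\log$ and $k$ factors in the final bound tracking the scenario enumeration and the binary searches.

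Third, I would substitute $\tau$ into the running time of the minmax $k$-tree-partition algorithm of \cite{chengolin2018}, whose complexity is parameterized by the cost of its local queries. Carrying the dependence on $k$ that arises because the worst-case scenario analysis itself scales with the number of parts, this yields exactly $O\bigl(\max(k^2,\log^2 n)\, k^2 n^2 \log^5 n\bigr)$. The discrete–continuous distinction then costs nothing extra: the characterization of worst-case scenarios in Section~\ref{sec:wcs} is insensitive to whether sinks lie on vertices or on interior edge points, and the partitioning framework of \cite{chengolin2018} already handles both cases, so the same argument delivers both versions of the theorem.

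I expect the genuine obstacle to lie inside Sections~\ref{sec:wcs} and~\ref{sec:local-max}, which the paper flags as its technical core: bounding the number of worst-case scenarios by $O(n)$, and showing that a purely local function $R$ faithfully reproduces the global regret of a partition. For the assembly step of Section~\ref{sec:alg} itself, the delicate point is the bookkeeping of the query cost $\tau$ so that scenario enumeration, evacuation-time evaluations, and the framework's own overhead multiply to the claimed bound rather than to something larger — and checking that $R$ truly meets the monotonicity and oracle hypotheses that make the \cite{chengolin2018} scheme applicable at all.
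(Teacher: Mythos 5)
Your high-level assembly matches the paper's: reduce to minmax tree partitioning with the local regret function, verify the interface of the \cite{chengolin2018} framework, and plug in an oracle cost. But there is a genuine gap in your accounting of that oracle cost, and it is exactly the step that produces the $n^2$ in the final bound. The local regret of a subtree $(V',x)$ is $\max_{s\in \calS^*(V',x)}\bigl(\Theta(V',x:s)-\Thetaopt(s)\bigr)$. You describe each scenario evaluation as ``computing an optimal within-subtree evacuation time'' and price the whole query at $\tilde O(n)$. That misses the term $\Thetaopt(s)$: for each of the up-to-$|V'|$ candidate scenarios one must compute the \emph{global} optimal $k$-sink evacuation time of the entire tree under $s$, which the paper does by running the full $k$-sink location algorithm of Theorem~\ref{thm: ksink} at cost $O(\max(k,\log n)\,k\,n\log^3 n)$ per scenario. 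Hence the oracle cost is $t_{\calA}(n')=O\bigl(n'\cdot\max(k,\log n)\,k\,n\log^3 n\bigr)$, i.e.\ roughly $\tilde O(k^2 n^2)$ on the whole tree, and feeding this into the $O(\max(k,\log n)\,k\,t_{\calA}(n)\log^2 n)$ bound of Theorem~\ref{Thm:Di} is what yields $O(\max(k^2,\log^2 n)\,k^2 n^2\log^5 n)$. With a $\tilde O(n)$ query cost the framework would return a near-linear final bound, not an $n^2$ one, so your last step asserts the theorem's running time rather than deriving it.

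Two smaller points are also glossed over. First, the raw local regret $r(V',x)$ can be negative (e.g.\ on a singleton subtree), so it fails set monotonicity and cannot be handed directly to Theorem~\ref{Thm:Di}; the paper patches this with $\bar r=\max(r,0)$ in Lemma~\ref{lem:Final} and checks that the truncation preserves both the optimal value and the optimal protocol, using $\Rmax(\calP,X)\ge 0$. Second, for the continuous version it is not enough that the worst-case-scenario characterization is ``insensitive'' to where sinks sit: one must verify the continuity and monotonicity conditions of Definition~\ref{def:cont} for $\bar r$ along an edge (including the possible left discontinuity at a vertex caused by congestion) so that Theorem~\ref{thm:contcg} applies. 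Neither of these is fatal, but both are genuine parts of the proof that your outline omits.
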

Note that for any fixed $k,$ this simplifies to $O(n^2 \log^7 n)$.

 We conclude by noting that Theorem \ref{thm:minmax}, similar to all the other results quoted on minmax-regret above with the exception of  \cite{golin2022minmax},  assumes {\em uniform} capacity.  This is because almost all results on minmax-regret have their own equivalent of Section \ref{sec:wcs}, proving that in their problem they only need to be concerned with a small number of worst-case scenarios. This ability to restrict scenarios has not been observed in the general capacity edge case for trees and thus there does not seem an obvious approach to attacking the minmax-regret problem for general capacity edges for trees.

\section{Minmax Monotone Functions on Trees}
\label{sec:minmax}
The following definitions have been modified\footnote{More specifically, the major modification is a change in notation to fit more cleanly into the framework of this paper.  For example, what are called   simply  ``partitions'' in  \cite{chengolin2018}   are called ``$k$-partitions'' here.  Also, conditions 1 and 2 on the next page were actually combined into one condition in the original in \cite{chengolin2018}.}
 from \cite{chengolin2018}.

\begin{definition} 
\label{def:Partitions}  (Fig.~\ref{fig:Prop 5})
Let $T=(V,E)$  be a tree.

%Let $U \subset V.$  The phrase ``$U$ is a subtree of $T$'' will denote that the graph induced by $U$ in $T$ is a subtree of $T.$

\begin{enumerate}[label=\emph{\alph*})]

\item $\calP =\{P_1,P_2,\ldots,P_k\}$ is a {\em $k$-partition} of $V$ if  each subset $P_i\subseteq V$ induces a subtree, $\cup_i P_i = V$,  and $\forall i \not=j$,  $P_i \cap P_j=\emptyset$.\\
The $P_i$ will be the {\em blocks} of $\mathcal{P}$.

\item Let $X= \{x_1,x_2,\ldots,x_k \} \subseteq V.$  
$\Lambda[X]$ will denote the set of all $k$-partitions $\calP =\{P_1,P_2,\ldots,P_k\}$  of $V$ such that $\forall i,$   $ X \cap P_{i} = \{x_i\}$.\\ Depending upon the underlying problem, the $x_i$ are referred to as the  {\em centers} or {\em sinks} of the $P_i.$

\item For any subtree $T'=(V',E')$ and $x \in V'$, removing $x$ from $V'$ leaves a forest  $\mathcal{F} = \{ T_1,...,T_t \}$.  Let $V_1,\ldots,V_t$ denote the respective vertices in $T_i.$\\  $V_i$ will be a {\em branch} of $V'$ falling off of $x.$
\end{enumerate}
\end{definition}

%For any $P \subseteq \Vin$ s.t. $ | S \cap P | = 1$, denote by $\langle S \cap P \rangle$ the unique node $s \in S \cap P$.
% Nodes in $P$ are {\em assigned to} the \emph{sink} $\langle S \cap P \rangle$.
%For simplicity, we often will % not fully describe partitions but 
%just say that {\em  node $v$ is assigned to sink $s.$ }

Let $f : 2^{V} \times V \rightarrow [0,+\infty]$  be an \emph{atomic cost function}. If  $V' \subseteq V$,
$f(V',x)$ should be interpreted as 
{\em the cost for  $x$ to serve the set of nodes $V'$}. 
$f$ is {\em a minmax monotone function for one sink}  if it satisfies properties 1-5 below. 
\begin{enumerate}
\item {\em Tree Inclusion}\\
If $V'$ does not induce a subtree or $x \not\in V'$ then  $f(V',x) = \infty$;
\item {\em Nodes serve themselves for free.}\\
if $V' = \{x\}$, then $f(V',x) = 0$.

\item {\em Set Monotonicity} (larger sets cost more to serve) \\
If $x \in V'_1 \subseteq V'_2 \subseteq V$, then $f(V'_1,x) \leq f(V'_2,x).$

\item {\em Path Monotonicity} (moving sink away from tree increases cost)  \\Let $u \in V'$ and  $x \notin V'$ be a neighbor of $u$ in $T$.\\  Then $f(V' \cup \{ x \},x) \geq f(V',u)$. 

\item {\em Max Tree Composition} (Fig.~\ref{fig:Prop 5}) \\ Let $T' = (V',E')$ be a subtree of $T$ and $x \in V'$ a node with $t$ neighbors in $V'.$  
%Set $\mathcal{F} = \{ T_1,...,T_t \}$ to be the forest created by removing $x$ from $T$, and 
Let $V_1,...,V_t$ be  the branches of $T$ falling off of $x.$
 Then $$f(V',x) = \max_{1\leq i\leq t} f(V_i \cup \{ x\},x).$$

\begin{figure}[t]
	\centering
\includegraphics[width=2in]{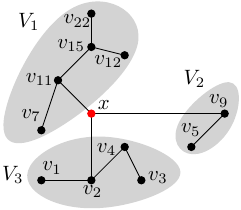}
\caption{Illustration of Definition \ref{def:Partitions} and Property 5. $V$ is the entire set of tree vertices.
The 3 branches of $V$ falling off of $x$ are $V_1,V_2,V_3$.  The cost of servicing branch $V_i$ is $f(V_i\cup \{x\},x)$.  The cost of servicing the entire tree using $x$  is $f(V,x) = \max_{1 \le i \le 3}f(V_i\cup \{x\},x).$}
	\label{fig:Prop 5}
\end{figure}

\end{enumerate}
%These are the only properties of $f$ that we will use. 
Note  that 1-5 only define  a cost function over one subtree  and one single sink. Function $\emptyf$ is now naturally extended to work on  on partitions and sets.

% function $F_X : \Lambda[X] \rightarrow [0,+\infty]$  and then a function
%$F: 2^V \rightarrow [0,+\infty]$
%by setting 
\begin{enumerate}[resume]
\item  {\em Max Partition  Composition}  (Fig.~\ref{fig:Prop 6})  \\
\begin{equation}
\forall \mathcal{P} \in \Lambda[X],\quad   f(\mathcal{P},X) = \max_{1 \le i\le |X|} f(P_{i},x_i).
\label{eq:MaxCompositionB}
\end{equation}
%\item
%\begin{equation}
% f(X) = \min_{\mathcal{P} \in \Lambda(X)} f(\mathcal{P},X).
%\end{equation}
\end{enumerate}

\begin{figure}[t]
	\centering
\includegraphics[width=5in]{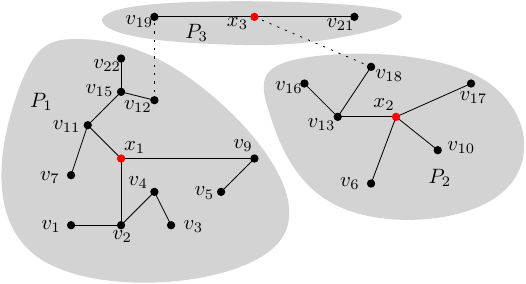}
\caption{Further illustration of Definition \ref{def:Partitions}  with Property 6. The complete tree has been partitioned into
$\calP=(P_1,P_2,P_3)$ with associated centers $X=(x_1,x_2,x_3).$  The cost associated with
$(\calP,X)$ is  $\max_{i=1,2,3} f(P_i,x_i)$. Note that the edges $(v_{12},v_{19})$ and $(x_3,v_{18})$ appear in the full tree but not in any of the $P_i$ subtrees.}
	\label{fig:Prop 6}
\end{figure}

\begin{definition}
\label{def:MMM}
A  cost function $\emptyf$  on $T$ that satisfies   properties 1-6 is called {\em minmax monotone}. 
\end{definition}

\begin{definition}
\label{def:kctpp}
Given a tree $T=(V,E)$ and  a minmax monotone cost-function $\emptyf$, the 
{\em minmax  $k$-center tree partitioning problem}, {\em $k$-center partitioning} for short,  is to find a set of $k$ sinks $X^* \subset V$ and a $k$-partition $\calP^* \in \Lambda[X]$ that minimize $f(\calP,X)$, i.e.,
$$f(\calP^*,X^*) = \min\Bigl\{f(\mathcal{P},X)\,:\,  X \subset V,\, |X| = k,\, \mathcal{P} \in \Lambda[X]\Bigr\}.
$$
\end{definition}
\par\noindent {\small \em Note: For shorthand, we sometimes write
$$f(\calP^*,X^*) =
 \min_{(\calP,X):  |X| = k}f(\calP,X).
$$
This is valid because,  if  $\mathcal{P} \not  \in \Lambda[X],$ then, from property 1, $f(\calP,X)= \infty).$
}

 \cite{chengolin2018}  describes  a generic technique for solving this  problem given an oracle for calculating the  cost of a one sink solution given the sink location.

\begin{definition}\label{def:Orac2}
Let $T=(V,E)$ be the input tree.

$\calA$ is an {\em oracle} for $\emptyf$ if, for all subtrees $T'=(V',E')$    of $T$ and $x \in V'$,  $\calA$ calculates  $f(V',x).$

$t_{\calA}(n')$ denotes the worst case time required for $\calA$ to calculate $f(V',x)$ for a subtree of size $n'=|V'|.$ $\calA$ is 
{\em asymptotically subadditive} if
\begin{itemize}
\item   $t_\mathcal{A}(n') = \Omega(n')$ and is non-decreasing.
\item For all nonnegative $n_i$,  $\sum_i t_\mathcal{A}(n_i)   = O \left( t_\mathcal{A}\Bigl( \sum_i n_i \Bigr) \right).$
\item $t_\mathcal{A}(n'+1) = O\left(t_\mathcal{A}(n')\right)$.
\end{itemize}
\end{definition}
Note that for $x\ge 1$ and  $y\ge 0$ any function of the form  $n^x \log^y n$ is asymptotically subadditive.

\begin{theorem}[\cite{chengolin2018}]
	\label{Thm:Di}
Let $\emptyf$ be a monotone minmax function and 
$\mathcal{A}$  an asymptotically subadditive oracle for $\emptyf.$
Then  the  $k$-center partitioning problem on  $T$ can be solved in  time $$O\bigl( \max(k\log k,\log n) k^2 t_{\mathcal{A}}(n) \log^2 n \bigr).$$
	 \label{theorem:FastC}
\end{theorem}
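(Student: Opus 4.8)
The plan is to solve the optimization problem by reducing it to a monotone sequence of \emph{feasibility tests} and then searching efficiently for the critical threshold, using the six properties of Definition~\ref{def:MMM} so that both steps work with only black-box oracle access to $\emptyf$. For a threshold $\lambda\ge 0$ let $\Phi(\lambda)$ be the predicate ``there exist $X=\{x_1,\dots,x_k\}$ and $\calP\in\Lambda[X]$ with $f(P_i,x_i)\le\lambda$ for all $i$.'' Since the objective is $f(\calP,X)=\max_i f(P_i,x_i)$ (Property~6), the optimum equals $\lambda^\ast=\min\{\lambda:\Phi(\lambda)\}$, and Set Monotonicity (Property~3) together with Path Monotonicity (Property~4) guarantee that $\Phi$ is nondecreasing in $\lambda$ and that $\lambda^\ast$ is attained as $f(P_i^\ast,x_i^\ast)$ for some block of an optimal solution. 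Hence $\lambda^\ast$ lies in the finite candidate set $\calC=\{\,f(V',x):x\in V',\ V'\text{ induces a subtree of }T\,\}$, and the whole problem reduces to testing $\Phi$ and locating the smallest feasible element of $\calC$.

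First I would implement the test of $\Phi(\lambda)$ by rooting $T$ and performing a greedy bottom-up sweep. Processing vertices in postorder, I maintain a single connected ``pending'' component growing toward the root; at each vertex $v$ I use Max Tree Composition (Property~5) to decompose the cost of the merged component over the branches falling off $v$, and I cut off the costliest branches (assigning each its own sink) exactly when the merged cost would otherwise exceed $\lambda$. The input is feasible for $\lambda$ iff the greedy closes at most $k$ blocks. Correctness — that deferring cuts as long as possible minimizes the number of blocks — follows by an exchange argument that invokes \emph{only} Properties~3 and~4: shrinking a block never raises its cost, and sliding a cut toward the leaves never turns a feasible partition infeasible. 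To keep one test cheap I would avoid re-querying overlapping components: each cut location is pinned by a binary search along a root path (legitimate because cost is monotone along the path by Property~4), and the branch decomposition of Property~5 ensures the oracle is only ever invoked on genuine subtrees whose sizes across one sweep sum to $O(n)$. Asymptotic subadditivity then bounds the oracle time of a single feasibility test by $O(t_{\calA}(n))$.

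Because $\calC$ can have size $\Theta(n^2)$ or larger it cannot be enumerated, so I would navigate it implicitly. Fixing a candidate sink $x$, Set Monotonicity makes $f(\cdot,x)$ monotone along any nested chain of subtrees containing $x$, so the relevant candidate values organize into $O(n)$ sorted sequences. A Frederickson-style sorted-matrix selection (equivalently, parametric search) over these sequences then isolates $\lambda^\ast$ in $O(\log n)$ selection rounds, each resolved by $O(\log n)$ feasibility tests, which is the source of the $\log^2 n$ factor. The remaining $\max(k,\log n)\,k$ factor comes from the partitioning layer: locating each of the up to $k$ cut points and the admissible interior sink of each closed block is itself a nested search whose depth is controlled by $\max(k,\log n)$, and this is layered on top of the value search. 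Multiplying the $O(t_{\calA}(n))$ cost of one feasibility test by the $O(\max(k,\log n)\,k\log^2 n)$ tests and search steps yields the claimed $O(\max(k,\log n)\,k\,t_{\calA}(n)\log^2 n)$ bound.

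The hard part is twofold and lies in making the two steps interact without losing a factor. First, proving the greedy cut rule optimal from the \emph{abstract} axioms alone — rather than from a concrete evacuation-time formula — forces every inequality in the exchange argument to be matched to exactly one of Properties~3--5, since any appeal to extra structure would break the black-box guarantee. Second, and more delicate, is the accounting that keeps a single feasibility test at $O(t_{\calA}(n))$ instead of $\omega(t_{\calA}(n))$: a naive greedy that re-evaluates each growing component would query overlapping subtrees of total size $\Theta(n^2)$, so the whole efficiency rests on using Properties~4 and~5 to determine all cut points and sink positions through path binary searches and branch decompositions that query the oracle only on disjoint pieces summing to $n$. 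Getting the sorted candidate sequences, the selection rounds, and the $k$ peeling phases to share this work — rather than each repeating a full independent sweep — is precisely what pins the exponent of $t_{\calA}$ at one and produces the stated running time.
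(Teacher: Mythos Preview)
This theorem is not proved in the present paper; it is quoted from \cite{chengolin2018} and invoked as a black box. There is therefore no in-paper proof to compare your proposal against.

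On the merits of your sketch: the high-level strategy --- reduce to a monotone feasibility predicate $\Phi(\lambda)$, test $\Phi$ by a greedy bottom-up peel justified via Properties~3--5, and locate $\lambda^\ast$ by a Frederickson-style parametric search over implicitly sorted candidate values --- is the right paradigm and is consistent with the general shape of the construction in \cite{chengolin2018}. However, the cost accounting you give is not yet a proof and contains real gaps. Your claim that a single feasibility test runs in $O(t_{\calA}(n))$ oracle time is not supported: when you binary-search a cut point along a root path you invoke the oracle on a \emph{nested} chain of subtrees, and nested subtrees do not have sizes summing to $O(n)$, so asymptotic subadditivity alone does not collapse the cost to $O(t_{\calA}(n))$. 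Similarly, the sentence attributing the $\max(k,\log n)\,k$ factor to ``a nested search whose depth is controlled by $\max(k,\log n)$'' is too vague to verify; the precise origin of that factor in \cite{chengolin2018} is a specific recursive spine/centroid-path decomposition interacting with the $k$ peeling phases, none of which you have described. To turn this into an actual proof you would need to reproduce that decomposition and its charging argument from \cite{chengolin2018}.
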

The algorithm underlying this Theorem is based on  parametric search and its  correctness only depends upon the properties
listed in Definitions  \ref{def:Partitions}-\ref{def:Orac2}.

For context, we note that if $f(V',x) =\max_{v \in V'} w_v d(v,x)$
where $w_v$ is the weight on node $v$ and $d(v,x)$ is the length of the unique  from $v$ to $x$ then the $k$-center partitioning problem is just the classic weighted $k$-center on a tree problem.
$f(\cdot,\cdot)$ is  a monotone minmax function.
Since  $f(V',x)$  can be evaluated in time $O(|V'|)$,  $t_{\mathcal{A}}(n)=O(n)$, and is asymptotically subadditive, so  Theorem \ref {Thm:Di}  immediately implies that the weighted $k$-center problem can be solved in time
$$O\bigl( \max(k\log k,\log n) k^2 n \log^2 n \bigr).$$

We emphasize that this is NOT an interesting result on its own since weighted $k$-center on a tree can  already be solved in $O( n \log n)$ time \cite{wang2021n}. It is only intended to illustrate a simple application of Theorem \ref {Thm:Di}.

\subsection{Dynamic Confluent Flows on Trees -- Evacuation protocols}

The formal input to this problem is a tree $T=(V,E)$ along with
\begin{itemize}
\item  A {\em scenario}  $s=(w_{v_1}(s), w_{v_2}(s),\ldots, w_{v_n}(s))$.\\
The problem starts with  
$w_{v_i}(s) \ge 0$ ``units''  of flow items  located  on vertex $v_i \in V$.  All of this flow needs to be evacuated (moved)  to some sink.

\item For every  edge $e = (u,v) \in E$,  an associated length $d(u,v)>0$, denoting the time it takes a particle of flow to travel from $u$ to $v$. For $(u,v) \not\in E,$  $d(u,v)$ is defined to be the sum of the lengths of the edges on the unique path in $T$ connecting $u$ and $v.$
\item A {\em capacity} $c > 0$, denoting the {\em rate}, i.e., amount of flow  that can enter an edge in one unit of time.
\end{itemize}
To move $w$ units of flow along edge $e=(u,v)$ from $u$ to $v$  note that the last particle of flow requires waiting   $w/c$ units of time to enter $e$. After travelling along $e$ it finally arrives at $v$ at time $  \frac {w} c +  d(u,v) .$

Let $T'=(V',E')$ be some subtree of $T$   with $x \in V'.$ 
$\Theta(V',x:s)$  denotes the time required to evacuate all $w_{v'}(s)$  flow  on all $v' \in V'$ to sink $x$. This is the {\em last} time at which a particle of flow reaches $x.$   
%  There are two different variants of dynamic flow in the literature, corresponding to different physical environments.  These are:  
%  \begin{itemize}
%  \item {\bf 1:Atomic  Flow} The $w_v(s),$ $d(u,v)$ and $c$ must all be integers.  Items are indivisible, e.g., people, that move in groups of at most $c$ along an edge.  It therefore requires  $d(u,v) + \left\lceil \frac {w } c\right\rceil  -1 $ time to move  $w$ items from $u$ to $v$. 
%  \item {\bf 2: Non-Atomic- Flow:}  The $w_v(s),$ $d(u,v)$ and $c$ can be real non-integral.     Items are infinitely divisible, e.g., water, and   can enter an edge  at a rate of $c$ units per unit time. It therefore requires  $d(u,v) +  \frac {w} c $ time to move  $w$  units from $u$ to $v$.  
%  \end{itemize}

{\em Congestion} occurs when too many items are waiting to enter an edge.  Congestion can build up if items arrive at a vertex  faster than the rate ($c$) at which they leave it.  This can happen if multiple edges feed into one vertex.
The formula for $\Theta(V',x:s)$ has been derived by multiple authors, e.g., \cite{Higashikawaa,Mamada2006} in different ways.  
Definition \ref{def:DW} and Lemma \ref{lem:unit flow} below are taken, slightly  modified\footnote{The modification is just a change of notation to fit into the framework of this paper.} from  \cite[2.2.5]{Higashikawa2014c}.
\begin{definition}
\label{def:DW}
Let $T'=(V',E')$ be a subtree and  $x \not\in V'$  be a neighbor of some node in $V'$.
For every $v' \in V',$  define
%$D(V',x,v)$ to be  the set of vertices in $V'$  that are in the same subtree as $v$ and at last as far from $x$ as $v:$
\begin{equation}
\label{eq:Ddef}
D(V',x,v') =
\{v \in V' \,:\, d(v,x) \ge d(v',x) \}
\end{equation}
Further set
\begin{equation}
\label{eq:Wdef}
W(V',x,v':s) = \sum_{v \in D(V',x,v')} w_v(s).
\end{equation}
\end{definition}

%\begin{lemma} \label{lem:unit flow}\  \\
% \par\noindent\underline{\bf 1:Atomic  Flow:}
%Let $T'=(V',E') $ be  a subtree of $T$ and $x \not\in V'$ be  a neighbor of some node in $V'$.  Then
%\begin{equation}
%\label{eq:af}
%%\Theta(V'  \cup \{x\}, x: s) =
%%\max_{ v \in V'} 
%%\left(
%%d(v,x) + \left\lceil \frac 1 c  \sum_{\substack{ v' \in V' \\ d(v',x) \ge d(v,x)} } w_{v'}\right\rceil -1 
%%\right)
%\Theta(V'  \cup \{x\}, x: s) =
%\max_{ v' \in V': W(V',x,v:s) > 0} 
%\left(
%d(v',x) + \left\lceil \frac {W(V',x,v':s)} c \right\rceil -1 
%\right)
%\end{equation}
%
%\par\noindent\underline {\bf 2: Non-Atomic  Flow:}
% Let $T'=(V',E') $ be  a subtree of $T$ and $x \not\in V'$ be  a neighbor of some node in $V'$.   Then
%\begin{equation}
%\label{eq:naf}
%%\Theta(V'  \cup \{x\}, x: s) =
%%\max_{ v \in V'} 
%%\left(
%%d(v,x) + \frac 1 c  \sum_{\substack{ v' \in V' \\ d(v',x) \ge d(v,x)} } w_{v'}
%%\right)
%\Theta(V'  \cup \{x\}, x: s) =
%\max_{ v' \in V': W(V',x,v:s) > 0} 
%\left(
%d(v',x) +  \frac {W(V',x,v':s)} c  
%\right)
%\end{equation}
%\end{lemma}

\begin{lemma} \label{lem:unit flow}\  \\
 Let $T'=(V',E') $ be  a subtree of $T$ and $x \not\in V'$ be  a neighbor of some node in $V'$.   Then
\begin{equation}
\label{eq:naf}
%\Theta(V'  \cup \{x\}, x: s) =
%\max_{ v \in V'} 
%\left(
%d(v,x) + \frac 1 c  \sum_{\substack{ v' \in V' \\ d(v',x) \ge d(v,x)} } w_{v'}
%\right)
\Theta(V'  \cup \{x\}, x: s) =
%\max_{ v' \in V': W(V',x,v':s) > 0} 
\max_{ v' \in V'} 
\left(
d(v',x) +  \frac {W(V',x,v':s)} c  
\right)
\end{equation}
\end{lemma}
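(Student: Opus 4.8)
The plan is to establish the evacuation-time formula \eqref{eq:naf} by a two-sided argument: first a lower bound showing that for every $v' \in V'$ the quantity $d(v',x) + W(V',x,v':s)/c$ is a valid lower bound on the evacuation time, and then a matching upper bound showing that some routing achieves the maximum of these quantities. I would begin by recalling the basic one-edge fact stated just above the lemma: pushing $w$ units of flow across a single edge $e=(u,v)$ of length $d(u,v)$ at rate $c$ has the last particle arrive at time $w/c + d(u,v)$. The key structural observation on a tree is that, since the flow is confluent and $x$ is attached to a subtree, every particle travels along the unique path toward $x$, and the flow is naturally ``nested'': the set $D(V',x,v')$ of vertices at distance $\ge d(v',x)$ from $x$ is exactly the set of vertices whose flow must pass through the ``cut'' at radius $d(v',x)$ around $x$.

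For the lower bound, fix any $v'\in V'$ with $W(V',x,v':s)>0$. All $W(V',x,v':s)$ units of flow originating in $D(V',x,v')$ must cross the region within distance $d(v',x)$ of $x$; in particular each such unit, after it is within distance $d(v',x)$ of $x$, still needs time $d(v',x)$ (at least) to reach $x$ travelling along tree edges, and the rate at which flow can be delivered to $x$ is at most $c$. Hence delivering all $W(V',x,v':s)$ units to $x$ takes at least $d(v',x) + W(V',x,v':s)/c$; making this precise requires arguing that the ``innermost'' bottleneck — the edge incident to $x$, or more carefully the aggregate capacity of the cut at radius $d(v',x)$ — cannot exceed $c$, which on a tree holds because all flow reaching $x$ funnels through the single edge into $x$. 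Taking the max over all valid $v'$ gives the lower bound.

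For the upper bound I would exhibit (or invoke the known structure of) the optimal evacuation protocol on a tree: route every particle greedily toward $x$, breaking any congestion ties by giving priority, at each vertex, to flow that is farther from $x$ (equivalently, process the vertices in decreasing order of distance from $x$). Under this protocol one shows by induction on the distance layers — or by direct bookkeeping of when flow enters the final edge at $x$ — that the last particle arrives exactly at time $\max_{v'} \bigl(d(v',x) + W(V',x,v':s)/c\bigr)$: the particle achieving the max is the one originating from the ``critical'' vertex $v'$ whose combined travel-plus-queueing time dominates. This is the standard derivation appearing in \cite{Mamada2006,Higashikawaa,Higashikawa2014c}, so I would either reproduce their layer-by-layer argument or cite it, adapting notation to Definition \ref{def:DW}.

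The main obstacle is the upper-bound direction — specifically, verifying that the greedy farthest-first protocol genuinely avoids ever leaving the edge into $x$ idle while flow is still waiting upstream, so that congestion never does worse than the formula predicts. One must handle the case where several branches merge: even though flow from different branches competes for the edge into $x$, the total volume crossing any radius-$r$ cut is captured by a single $W(\cdot)$ term, and the farthest-first rule ensures the ``currently binding'' $v'$ is always the one realizing the max. Carefully checking that the induction goes through when branches of differing ``lengths'' feed into $x$ — i.e., that a short, heavily-loaded branch cannot create a delay exceeding $d(v',x)+W(V',x,v':s)/c$ for the relevant $v'$ — is the delicate point; the nestedness of the sets $D(V',x,v')$ is exactly what makes it work, and I would make sure the proof highlights that monotonicity.
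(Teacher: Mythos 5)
Your two-sided plan is sound, and it is worth noting that the paper itself does not actually prove Lemma~\ref{lem:unit flow}: it cites the derivations in \cite{Mamada2006,Higashikawaa,Higashikawa2014c} and supplies only a ``physical intuition'' sketch, namely that under a fixed (closest-first) tie-breaking rule the flow reaches $x$ in contiguous groups at rate $c$, the last group consists exactly of the flow originating in $D(V',x,v')$ for the critical vertex $v'$, its head particle arrives uncongested at time $d(v',x)$, and its tail therefore arrives at $d(v',x)+W(V',x,v':s)/c$. Your decomposition into a lower bound and an upper bound is a genuinely different (and more rigorous) organization. The lower bound is the strong part of your write-up: since $V'$ induces a subtree and $x\notin V'$, $x$ is adjacent to exactly one vertex of $V'$, so all flow enters $x$ through a single capacity-$c$ edge; no particle of $D(V',x,v')$ can reach $x$ before time $d(v',x)$, and at most $c\,(t_2-t_1)$ units can arrive during $[t_1,t_2]$, giving $\Theta \ge d(v',x)+W(V',x,v':s)/c$ for every admissible $v'$. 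This direction is completely airtight and is something the paper's intuition note does not even isolate. Your upper bound is where the real work lies, and you have correctly identified the delicate point (no idleness of the sink edge while the critical group is still upstream); two remarks there. First, your farthest-first priority differs from the paper's closest-first convention, but for a single sink with uniform capacities the completion time is the same under any work-conserving discipline, so either choice can be made to work --- just be explicit that you are fixing one. Second, the nestedness of the sets $D(V',x,v')$ that you highlight is indeed the key to handling merging branches, and carrying out the layer-by-layer induction (or citing \cite{Higashikawa2014c} for it, as the paper does) completes the argument. In short: your approach buys a clean, checkable proof where the paper settles for intuition plus citation, at the cost of having to execute the upper-bound induction that the paper outsources.
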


\begin{figure}[t]
%	\centering
{\hspace{-0.7in} \includegraphics[width=6in]{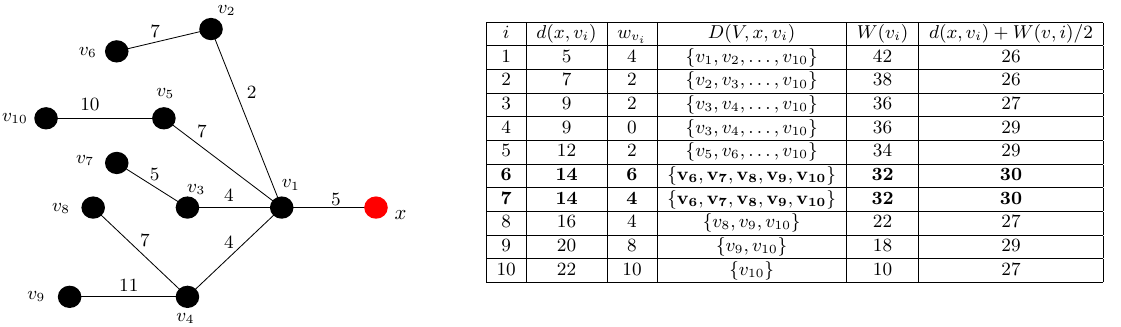}}
\caption{Example of evacuation to sink $x.$ All edges have capacity $c=2.$ 
Edge lengths are the number above the edge.  
Vertex indices are sorted by increasing distance from $x$ with ties being broken arbitrarily.  Vertex weights $w_v$ are not on the graph but are shown in the accompanying table. 
%$W(v_i) = \sum_{v: d(x,v) \ge d(x,v_i)} w(v).$ 
$W(v_i) = \sum_{v \in D(V,x,v_i)} w(v).$ 
The rightmost column shows the values compared in the right hand side of (\ref{eq:naf}).  Note that the cost of the tree is 30, which is when the last item (from $v_{10}$) will reach $x.$ In (\ref{eq:naf}),  this is achieved when $v= v_6$ or $v=v_7.$}
\label{fig:Evac}
\end{figure}

The function $\emptytheta$  trivially satisfies properties  1-4 in the previous section.  To see that it also satisfies property 5, let 
 $V_1,V_2,\ldots, V_t$ be the branches of $V'$ falling off of $x.$ Since $x$ has a unique neighbor 
$u_i \in V_i$, all items in $V_i$ evacuating to $x$ must pass through edge $(u_i,x)$ and do not interact at all with items from any other subtree $V_j,$  $j \not=i$ that are also evacuating to $x.$  Thus
%$$\Theta(V_i \cup \{x\}, x:s) = \mbox{the time at which the last item in $V_i$ reaches $x$}$$
%and 
$$ \Theta(V',x:s) = \max_{1 \le i\le t}  (\Theta(V_i \cup \{x\}, x:s))$$
and $\Theta(V',x:s)$  satisfies property 5. It is now extended  naturally to work on partitions and sets.
\begin{definition}
For any set of $k$ sinks $X$ and partition $\calP \in \Lambda[X]$ define
$$
\Theta(\calP,X: s) =
                              \max\{\Theta(P_1, x_1:s),  \Theta(P_2, x_2:s), \ldots, \Theta(P_k, x_k:s) \}.
$$

Further define
$$\Thetakopt(s) = \min \bigl\{   \Theta(\calP,X: s) \,:\,  |X| =k  \mbox{ and }  \calP \in \Lambda[X]\bigr\}$$
to be the minimum time required to evacuate all items.  The $(\calP,X)$ pair achieving this value is an {\em optimal evacuation protocol}.  When $k$ is fixed and understood we will write $ \Thetaopt$ instead of $\Thetakopt$.
\end{definition}

Intuitively, $(\calP,X)$ denotes that $T$  is partitioned into subtrees, each containing one sink to which all flow in the subtree evacuates. $\Theta(\calP,X: s) $ is the time required to evacuate all of the items with $(\calP,X)$ under scenario $s.$  $\Thetakopt(s)$ is the minimum time required to evacuate the entire tree if it is $k$-partitioned (and all edges have uniform capacity). This was solved in $O(n \log n)$ time for $k=1.$

Lemma \ref{lem:unit flow}  gives an  immediate   $O(n' \log n')$  oracle   for solving the rooted one-sink version of the problem, i.e.,  when the sink is known in advance.   This is to use an $O(n')$ breadth first search starting at $x$ to separate $V'$ into its branches while calculating all of the $d(v',x)$ values.  For each branch $V_i$, sort the $d(v',x)$  for $v' \in V_i$ by increasing value  in $O(|V_i| \log |V_i|)$  time   and calculate 
$\Theta(V'  \cup \{x\}, x: s)$ in $O(|V_i|)$ time using brute force.  Then return  the maximum over all of the branch values.  By plugging this 
$ t_{\mathcal{A}}(n') = O(n' \log n')$ oracle into 
 Theorem  \ref{theorem:FastC}, \cite{chengolin2018} derived    
 
 \begin{theorem} 
\label{thm: ksink}\cite{chengolin2018}
There is an  algorithm that solves the $k$-sink location problem on trees with uniform capacity in 
 $$O( \max(k\log k,\log n)\,  k^2 n  \log^3 n)$$ time.
 \end{theorem}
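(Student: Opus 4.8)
The plan is to cast the $k$-sink location problem as an instance of the minmax $k$-center tree partitioning problem and then invoke the generic algorithm of Theorem~\ref{Thm:Di}. Concretely, I would fix the (known) input scenario $s$ and take as the atomic cost function $f(V',x)=\Theta(V',x:s)$, the evacuation time of $V'$ to $x$. Two ingredients then have to be supplied: (i) that $\emptytheta$ is minmax monotone in the sense of Definition~\ref{def:MMM}, so that Theorem~\ref{Thm:Di} applies; and (ii) an asymptotically subadditive oracle $\calA$ which, given a subtree $T'=(V',E')$ and $x\in V'$, returns $\Theta(V',x:s)$, together with a bound on $t_{\calA}(\cdot)$.

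For (i) I would verify properties 1--6 one at a time. Properties 1 and 2 (Tree Inclusion and free self-service) are immediate from the definition of $\emptytheta$. Set Monotonicity (property 3) follows from the closed form of Lemma~\ref{lem:unit flow}: enlarging $V'$ can only enlarge each set $D(V',x,v')$ of (\ref{eq:Ddef}) and hence only increase each aggregate weight $W(V',x,v':s)$, so the maximum in (\ref{eq:naf}) cannot drop. Path Monotonicity (property 4) follows because moving the sink from $u\in V'$ to an adjacent $x\notin V'$ adds the positive length $d(u,x)$ to the distance of every node of $V'$ to the sink while leaving all aggregate weights unchanged, hence the evacuation time only increases. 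Property 5 (Max Tree Composition) is the single point that needs a genuine, if short, argument, and it is exactly the observation recorded in the text preceding the theorem: the unique neighbor $u_i\in V_i$ of $x$ is a cut vertex, so every unit of flow originating in branch $V_i$ must traverse edge $(u_i,x)$ and never shares an edge with flow from another branch $V_j$; thus distinct branches cannot congest one another and $\Theta(V',x:s)=\max_i\Theta(V_i\cup\{x\},x:s)$. Property 6 is then the definitional extension of $\emptytheta$ to partitions via $\Theta(\calP,X:s)=\max_i\Theta(P_i,x_i:s)$.

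For (ii) the oracle is already read off from Lemma~\ref{lem:unit flow}: a breadth-first search from $x$ separates $V'$ into its branches in $O(n')$ time; within each branch $V_i$ one sorts the values $d(v',x)$, $v'\in V_i$, in $O(|V_i|\log|V_i|)$ time, then computes the suffix sums $W(V',x,v':s)$ and the right-hand side of (\ref{eq:naf}) in a single $O(|V_i|)$ scan; finally one returns the maximum over the branches. This gives $t_{\calA}(n')=O(n'\log n')$, and since $n\log n$ is of the form $n^{x}\log^{y}n$ with $x\ge 1$, $y\ge 0$, it is asymptotically subadditive as noted after the definition of that notion. Plugging $t_{\calA}(n)=O(n\log n)$ into Theorem~\ref{Thm:Di} yields a running time of
$$O\bigl(\max(k,\log n)\,k\cdot n\log n\cdot\log^2 n\bigr)=O\bigl(\max(k,\log n)\,k\,n\,\log^3 n\bigr),$$
and the partitioning framework of \cite{chengolin2018} handles the discrete and continuous variants uniformly, so no separate treatment is needed.

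I do not expect a substantive obstacle here: essentially all of the algorithmic difficulty --- the recursive partitioning scheme and its amortized analysis --- is encapsulated in the cited Theorem~\ref{Thm:Di}. The one place where I would be most careful is property 5: making the ``flows in different branches do not interact'' claim airtight requires briefly unwinding the definition of a feasible confluent evacuation schedule (that flow from $V_i$ only ever occupies edges on the $u_i\to x$ path and therefore its congestion is determined by $V_i$ alone), but this is routine rather than hard.
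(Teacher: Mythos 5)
Your proposal is correct and follows essentially the same route as the paper: verify that $\emptytheta$ is minmax monotone (with the branch-separation argument for Property 5 being the only nontrivial step, exactly as in the text preceding the theorem), supply the $O(n'\log n')$ BFS-plus-sort oracle from Lemma~\ref{lem:unit flow}, and plug into Theorem~\ref{Thm:Di}. The one loose end is your closing remark that the continuous variant comes for free --- the paper handles that case via the separate machinery of Definition~\ref{def:cont} and Theorem~\ref{thm:contcg}, which additionally requires checking that $\Theta(V_u\cup\{x\},x:s)$ is continuous and non-decreasing as $x$ moves along an edge --- but for the discrete statement as derived in the paper your argument matches.
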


%CONTINUOUS MINMAX FUNCTIONS

\section{Regret}
\label{sec:model}
%\Section{The Model and Regret}

In a minmax-{\em regret} model on trees,  the input tree $T$ is given  but some of the other input values are not fully known in advance.  The input specifies   restrictions on the allowed values for the missing inputs.

Concretely, in the  minmax-regret $k$-sink evacuation problem on trees the input tree $T$, capacity $c$ and  lengths $d(u,v)$ are all explicitly specified  as part of the  input.
The weights $w_v$ are {\em not}  fully specified in advance.  Instead, for each $v \in V$,  a  {\em range}  $[w_{i}^{-},w_{i}^{+}]$ within which $w_i$ must lie  is specified. $0 < w_v^+  \le w_v^-.$
The set of all possible allowed  scenarios is  the Cartesian product of all weight intervals,
\[ \calS=\prod_{v \in V}[w_{v}^{-},w_{v}^{+}],
\]  
 $s\in\mathcal{S}$ is an assignment of weights to all vertices. The {\em weight of a vertex $v$ under scenario $s$} is denoted by $w_{v}(s)$.

\begin{figure}[t]
	\centering
\hspace*{-.5in}\includegraphics[width=6.3in]{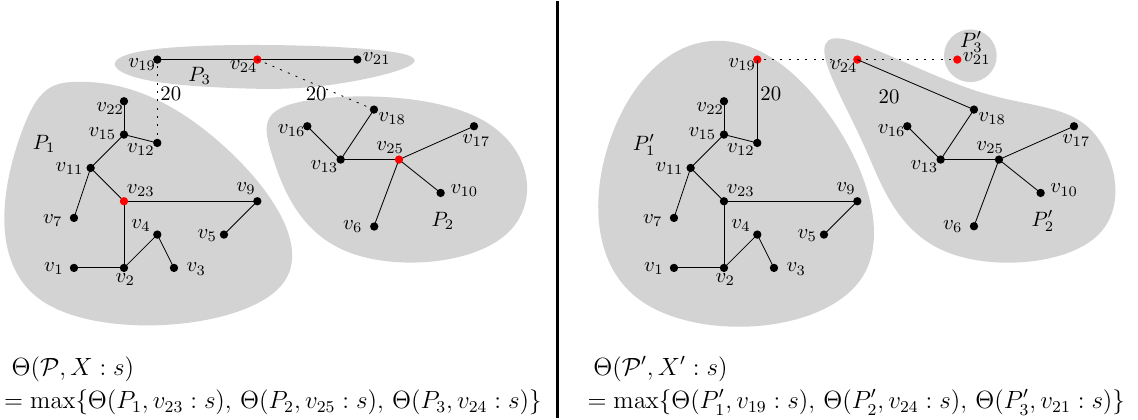}
\caption{Illustration of regret.  Capacity $c=1$.  All edges have length $d(u,v) =1$  except for 
$d(v_{12},v_{19}) = d(v_{18},v_{24}) = 20.$   Two different partitions $(\calP,X)$  and $(\calP',X')$ are illustrated on the left and right.  $X = \{v_{23}, v_{25}, v_{24}\}$ and $X' = \{v_{19}, v_{24}, v_{21}\}$.
For all vertices $v_i$ except for $i=19,21,24$,  
$[w^-_{v_i},w^+_{v_i}]=[1,2]$.  For  $i=19,21,24$,  $[w^-_{v_i},w^+_{v_i}]=[1,100]$. Let $s^*$ be a scenario for which $w_{v}(s^*)  = 1$ for all $v$ except for $w_{v_{19}}(s^*) = w_{v_{21}}(s^*)  = w_{v_{24}}(s^*) =100.$  Then  $\Theta(\calP,X:s^*) = \Theta(P_3,v_{24}:s^*) = 101.$  A little calculation shows that $\Theta(\calP',X':s^*) = \Theta(P_1,v_{19}:s^*)=30 = \Thetaopt(s^*). $ So $R(\calP,X : s) = 101-30=71.$
It is not that difficult to show that this $s^*$ is a worst case scenario  for  $(\calP,X)$ so $ \Rmax(\calP,X) = 71.$
%Now let  Let $s^{**}$ be a scenario for which $w_{v}(s^{**})  = 2$ for all $v \in P_1$ and $w_{v}(s^{**})  = 1$ for all other $v.$  Then $\Theta(\calP,X:s^{**}) = \Theta(P'_1,v_{19}:s^{**}) = 44.$
}
	\label{fig:Regret1}
	\end{figure}

\begin{definition}[Regret for $(\calP,X)$ under scenario $s$] \ \\
For fixed   
$(\calP,X)$  with $|X| =k$ and   $s\in\cal{S}$, the {\em regret} is defined as the difference between
 $\Theta(\calP,X: s)$ and  the optimal $k$-Sink evacuation time for $s,$ i.e., 
\begin{equation} R(\calP,X : s) =  \Theta(\calP,X: s)  - \Thetaopt(s).  \label{eq:regret}
\end{equation}
\end{definition}

\begin{definition}[Max-Regret for $(P,X)$] \ \\
\label{def:regret}
The {\em Maximum-Regret}  achieved (over all scenarios) for a choice of $(P,X)$ is
\begin{equation}   \Rmax(\calP,X)= \max_{s\in\calS}\left\{ R(\calP,X:s)  \right\}. \label{eq:max-regret}
\end{equation}
$s^* \in \calS $ is  a \emph{worst-case scenario} for $(P,X)$  if $ R_{\max}(\calP,X)=R(\calP,X:s^*).$ 
\end{definition}

Finally, set 
\begin{definition}[Global Minmax-Regret] \ \\
Let $k$ be fixed. 
$$ \Ropt = \min_{(\calP,X):  |X| = k}R_{\max}(\calP,X).
$$
$(\calP^*,X^*)$ is an {\em optimal minmax-regret evacuation protocol} if  $\Ropt = R_{\max}(\calP^*,X^*)$.
\end{definition}

For later use, we note that, by definition,   $\Rmax(\calP,X) \ge 0$, and thus
\begin{lemma}
\label{lem:pos}
$$ \forall X \subseteq V  \mbox{\rm  and }   \calP \in \Lambda[X],\quad \Rmax(\calP,X) \ge 0.$$
\end{lemma}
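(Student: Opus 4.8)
The statement to prove, Lemma~\ref{lem:pos}, asserts that $\Rmax(\calP,X) \ge 0$ for every choice of $k$ centers $X$ and partition $\calP \in \Lambda[X]$. The plan is essentially to unfold the definitions and exhibit a single scenario that witnesses non-negativity. Since $\Rmax(\calP,X) = \max_{s \in \calS} R(\calP,X:s)$, it suffices to find one scenario $s_0 \in \calS$ with $R(\calP,X:s_0) \ge 0$; in fact I claim \emph{every} scenario works.

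\begin{proof}
Fix any $k$-subset $X \subseteq V$ and any $\calP \in \Lambda[X]$, and let $s \in \calS$ be an arbitrary scenario. By definition of $\Thetakopt$,
\[
\Thetakopt(s) = \min\bigl\{ \Theta(\calP',X':s) \,:\, |X'| = k \mbox{ and } \calP' \in \Lambda[X'] \bigr\},
\]
where the minimum ranges over all $k$-center sets $X'$ and all partitions $\calP' \in \Lambda[X']$. In particular, the pair $(\calP,X)$ is one of the pairs over which this minimum is taken, so
\[
\Thetakopt(s) \le \Theta(\calP,X:s).
\]
Hence $R(\calP,X:s) = \Theta(\calP,X:s) - \Thetakopt(s) \ge 0$. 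Since the maximum of a set of non-negative numbers is non-negative,
\[
\Rmax(\calP,X) = \max_{s \in \calS} R(\calP,X:s) \ge 0.
\]
As $X$ and $\calP \in \Lambda[X]$ were arbitrary, this holds for all $X \subseteq V$ and $\calP \in \Lambda[X]$.
\end{proof}

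There is no real obstacle here: the only point requiring care is that $\Thetakopt(s)$ is defined as a minimum over \emph{all} admissible protocols for the scenario $s$, so a fixed $(\calP,X)$ can never beat it, giving $R(\calP,X:s) \ge 0$ pointwise in $s$; the statement then follows immediately since a max of non-negatives is non-negative. (The lemma is essentially a restatement of the remark ``by definition, $\Rmax(\calP,X) \ge 0$'' preceding it, but it is worth recording explicitly for later reference.) One subtlety worth a sentence in the write-up: $\calS$ is non-empty, since each interval $[w_v^-, w_v^+]$ is non-empty, so the maximum defining $\Rmax$ is over a non-empty set and is genuinely attained (or at least well-defined); this guarantees the inequality is not vacuous.
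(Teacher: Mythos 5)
Your proof is correct and matches the paper's reasoning: the paper simply asserts the lemma holds ``by definition,'' and your argument is exactly the unfolding of that remark --- $\Thetakopt(s)$ is a minimum over all $k$-sink protocols including $(\calP,X)$ itself, so $R(\calP,X:s)\ge 0$ pointwise and the maximum over scenarios is non-negative. No further comment needed.
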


\subsection{Minmax-Regret $k$-Sink Location Problem}
As described above, the input for the Minmax-Regret $k$-Sink Location Problem  is a dynamic flow  tree network  with   edge lengths, vertex weight {\em intervals} $[w_i^-,w_i^+]$ and  edge capacity $c$.
The goal is to calculate $\Ropt$ along with a corresponding  optimal minmax-regret evacuation protocol $(\calP^*,X^*)$ with associated worst case scenario $s^*.$

Intuitively, the goal  in the $k$-sink location problem is to find the best (optimal) sink location(s) that can evacuate all of the items as quickly as possible.  The goal in the  minmax regret version is to find the best location(s) given only partial information.  ``Best'' is now defined as the solution with smallest worst case difference from optimal under all allowable scenarios.

 This setup can also be viewed  as a 2-person Stackelberg game 
 % \cite[97-98]{osborne1994course} 
 between the algorithm $A$ and  adversary $B$:

\begin{enumerate}
  \item Algorithm $A$ (leader): creates an evacuation protocol  $(\calP,X)$.
  \item Adversary $B$ (follower): chooses a  \emph{worst-case scenario} $s^*\in\mathcal{S}$ for  $(\calP,X)$    i.e.,   $R(\calP,X:s^*)  = R_{\max}(\calP,X)$.
%  \item The objective of $A$ is to find the $k$ partitions and sinks $\{\hat{P},\hat{Y}\}$ that minimizes the \emph{max-regret}.
\end{enumerate}

$A$'s objective is to minimize the value of  $R(\calP,X:s^*)$ which is equivalent to finding  optimal minmax-regret evacuation protocol $(\calP^*,X^*).$

Note that even though we have defined regret only for the $k$-sink evacuation problem, this formulation can be (and has been) extended to many other problems by replacing $\Theta$ with some  other  function.

\medskip

\section{Worst Case Scenario Properties}
\label{sec:wcs}
%WC_Scenarios.tex

The explicit formula for evacuation time given by Eq.~(\ref{eq:naf})
%in Lemma \ref {lem:unit flow} 
immediately implies that increasing weights cannot decrease evacuation time, written as: 
\begin{lemma}
\label{lem:dom2}
Let $s, s' \in \calS$ be two scenarios such that $\forall v \in V,  w_v(s') \le w_v(s)$.  
\begin{enumerate}
\item  If $T'=(V',E') $ is  a subtree of $T$   and $x \in V'.$ 
$$ \Theta(V', x: s') \le   \Theta(V', x: s).$$
\item $\Thetaopt(s') \le \Thetaopt(s).$
\end{enumerate}
\end{lemma}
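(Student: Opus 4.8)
The plan is to read both claims directly off the explicit evacuation-time formula in Lemma~\ref{lem:unit flow}, using only the monotonicity of the relevant quantities in the vertex weights.

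\textbf{Part 1.}
First I would fix a subtree $T'=(V',E')$ and a sink $x\in V'$. By Property~5 of $\emptytheta$ (Max Tree Composition), $\Theta(V',x:s)=\max_i\Theta(V_i\cup\{x\},x:s)$ over the branches $V_i$ of $V'$ falling off $x$, and likewise for $s'$; so it suffices to compare $\Theta(V_i\cup\{x\},x:\cdot)$ branch by branch, reducing to the rooted one-sink case where $x\notin V_i$ is a neighbor of a node in $V_i$. Now apply Eq.~\eqref{eq:naf}. For each $v'$, the set $D(V_i,x,v')$ depends only on the tree and the lengths $d(\cdot,x)$, not on the scenario, so $D(V_i,x,v')$ is the same under $s$ and $s'$. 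Since $w_v(s')\le w_v(s)$ for all $v$ and all weights are nonnegative, $W(V_i,x,v':s')=\sum_{v\in D(V_i,x,v')}w_v(s')\le \sum_{v\in D(V_i,x,v')}w_v(s)=W(V_i,x,v':s)$ for every $v'$. Hence each term $d(v',x)+W(V_i,x,v':s')/c$ is at most the corresponding term $d(v',x)+W(V_i,x,v':s)/c$. One small bookkeeping point: the max in \eqref{eq:naf} ranges only over $v'$ with $W(V_i,x,v':\cdot)>0$, and lowering weights can turn some positive $W$ into $0$, i.e.\ shrink the index set; but that only removes terms and each remaining term is dominated, so $\Theta(V_i\cup\{x\},x:s')\le\Theta(V_i\cup\{x\},x:s)$ (and if the index set becomes empty the left side is the evacuation time of an all-zero-weight subtree, which is $0$ by Property~2, still $\le$ the right side). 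Taking the max over branches $i$ gives $\Theta(V',x:s')\le\Theta(V',x:s)$.

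\textbf{Part 2.}
For the second claim, let $(\calP^*,X^*)$ be an optimal $k$-sink evacuation protocol for the larger scenario $s$, so that $\Thetaopt(s)=\Theta(\calP^*,X^*:s)=\max_i\Theta(P^*_i,x^*_i:s)$. Applying Part~1 to each block $P^*_i$ with sink $x^*_i$, we get $\Theta(P^*_i,x^*_i:s')\le\Theta(P^*_i,x^*_i:s)$ for every $i$, hence $\Theta(\calP^*,X^*:s')\le\Theta(\calP^*,X^*:s)=\Thetaopt(s)$. Since $\Thetaopt(s')$ is the minimum of $\Theta(\calP,X:s')$ over all $k$-partitions with their sinks, and $(\calP^*,X^*)$ is one such feasible choice, $\Thetaopt(s')\le\Theta(\calP^*,X^*:s')\le\Thetaopt(s)$, which is the desired inequality.

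\textbf{Main obstacle.}
There is no real difficulty here; the only thing to be careful about is the ``$W>0$'' restriction in the max of Eq.~\eqref{eq:naf}, i.e.\ that decreasing weights changes the set of indices over which the maximum is taken rather than just the values of fixed terms. The clean way to handle this, as sketched above, is to observe that the set $D(V',x,v')$ is scenario-independent, so the map $v'\mapsto d(v',x)+W(V',x,v':s)/c$ is termwise monotone in the weights and the feasible index set only shrinks when weights go down; a maximum over a smaller index set of termwise-smaller quantities is smaller, with the degenerate all-zero case covered separately by Property~2. Everything else is just threading this through Properties~5 and the definition of $\Thetaopt$.
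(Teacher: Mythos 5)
Your proof is correct and follows exactly the route the paper intends: the paper gives no explicit proof, merely asserting that Lemma~\ref{lem:dom2} follows immediately from Eq.~(\ref{eq:naf}), and your argument (scenario-independence of $D(V',x,v')$, termwise monotonicity of $W$ in the weights, the shrinking index set, and using the optimal protocol for $s$ as a feasible candidate for $s'$) is precisely the omitted justification. No discrepancies.
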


 Eq.~\ref{eq:naf} %  in Lemma \ref{lem:unit flow}
 also immediately implies 
\begin{lemma}
\label{lem:dom3}
Let $s$ be a scenario and $s'$ be another scenario such that, for some $v' \in V$, and some $\delta > 0$ 
$$w_v(s')
=
\left\{
\begin{array}{cc}
w_v(s) & \mbox{if $v \not= v'$},\\
w_{v}(s) + \delta & \mbox{if $v = v'$}.
\end{array}
\right.
$$
Then 
$$\Thetaopt(s')  \le \Thetaopt(s) + \frac{\delta}{c}.$$
\end{lemma}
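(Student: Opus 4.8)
The plan is to exploit the monotonicity of $\Thetaopt$ established in Lemma~\ref{lem:dom2} together with the explicit formula of Lemma~\ref{lem:unit flow}. Fix an optimal evacuation protocol $(\calP, X)$ for the scenario $s$, so that $\Thetaopt(s) = \Theta(\calP, X : s)$, and let $x_i \in X$ be the sink of the block $P_i$ containing the special vertex $v'$. The key observation is that increasing the weight of a single vertex $v'$ by $\delta$ does not change the structure of the network, so $(\calP, X)$ remains a valid (though possibly no longer optimal) protocol under $s'$; hence $\Thetaopt(s') \le \Theta(\calP, X : s')$. It therefore suffices to show $\Theta(\calP, X : s') \le \Theta(\calP, X : s) + \delta/c$.

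Since $s$ and $s'$ agree on all blocks other than $P_i$, we have $\Theta(P_j, x_j : s') = \Theta(P_j, x_j : s)$ for $j \neq i$, and by the max-composition over blocks it is enough to bound $\Theta(P_i, x_i : s')$. Here I would apply Eq.~\ref{eq:naf} (rooting at $x_i$, decomposing $P_i$ into its branches falling off $x_i$ as in Property~5 if one prefers, though it is not strictly necessary). For each candidate vertex $v'' \in P_i$, the quantity $W(P_i \setminus \{x_i\}, x_i, v'' : s')$ equals $W(P_i \setminus \{x_i\}, x_i, v'' : s)$ if $v' \notin D(P_i \setminus \{x_i\}, x_i, v'')$, and equals $W(P_i \setminus \{x_i\}, x_i, v'' : s) + \delta$ otherwise — in either case it is at most the $s$-value plus $\delta$. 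Consequently each term $d(v'', x_i) + W(\cdot : s')/c$ is at most the corresponding $s$-term plus $\delta/c$, and taking the maximum over the relevant $v''$ gives $\Theta(P_i, x_i : s') \le \Theta(P_i, x_i : s) + \delta/c \le \Thetaopt(s) + \delta/c$.

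One minor technical wrinkle to handle carefully: the maximum in Eq.~\ref{eq:naf} ranges only over $v''$ with $W(\cdot : s) > 0$, and perturbing weights could in principle change which indices satisfy this (though with positive weights it cannot). Since all weights are nonnegative and $\delta > 0$, the set of positive-weight vertices only grows from $s$ to $s'$; any newly-included $v''$ must have had $W(\cdot : s) = 0$, hence $W(\cdot : s') = \delta$, and its term $d(v'', x_i) + \delta/c$ is dominated by $d(v'', x_i)$ plus $\delta/c$, which in turn is bounded by $\Thetaopt(s) + \delta/c$ because $\Thetaopt(s) \ge d(v'', x_i)$ (the last item from $v''$ in the $s$-scenario, or a dominating vertex, already travels at least distance $d(v'', x_i)$ — more carefully, $P_i$ must contain some positive-weight vertex at distance $\ge d(v'', x_i)$, else the tree is trivial and the bound is immediate). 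I do not expect any genuine obstacle here; the whole argument is a short application of Lemma~\ref{lem:unit flow} plus the trivial fact that $(\calP, X)$ stays feasible under $s'$. The only place demanding a line of care is the bookkeeping around which vertices contribute to the maximum in Eq.~\ref{eq:naf}.
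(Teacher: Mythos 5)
Your overall route is the same as the one the paper intends (the paper offers no proof, asserting the lemma is immediate from Eq.~(\ref{eq:naf})): fix an optimal protocol $(\calP,X)$ for $s$, observe it remains feasible under $s'$, and argue that every term of Eq.~(\ref{eq:naf}) grows by at most $\delta/c$. You are also right that the only delicate point is that the index set $\{v'' : W(\cdot,\cdot,v'':\cdot)>0\}$ of the max can grow. The genuine gap is in your resolution of that point, which is circular: a vertex $v''$ that is \emph{newly} included satisfies $W(V_j,x_i,v'':s)=0$, which by Definition~\ref{def:DW} means that \emph{no} vertex of that branch at distance $\ge d(v'',x_i)$ from $x_i$ carries positive weight under $s$ --- exactly the negation of your justification that ``$P_i$ must contain some positive-weight vertex at distance $\ge d(v'',x_i)$.'' Nor is the fallback ``else the tree is trivial'' available: the tree can be arbitrary; only the far part of that one branch is weightless under $s$.

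The case does close cleanly whenever $w_{v'}(s)>0$: then $W(V_j,x_i,v':s)\ge w_{v'}(s)>0$, so $v'$ already appears in the max under $s$ and hence $d(v',x_i)\le\Theta(\calP,X:s)$; since every newly included $v''$ satisfies $d(v'',x_i)\le d(v',x_i)$ and contributes exactly $d(v'',x_i)+\delta/c$, all new terms are at most $\Theta(\calP,X:s)+\delta/c$ and your chain of inequalities holds. But if $w_{v'}(s)=0$ and the whole set $D(V_j,x_i,v')$ is weightless under $s$, the fixed-protocol bound $\Theta(\calP,X:s')\le\Theta(\calP,X:s)+\delta/c$ is simply false (two vertices $u,v'$ with $d(u,v')=100$, $w_u(s)=10$, $w_{v'}(s)=0$, sink at $u$: the cost jumps from $0$ to $100+\delta/c$), and in that example $\Thetaopt$ itself jumps from $0$ to at least $100$, so the stated inequality cannot be rescued by choosing a better protocol for $s'$ either. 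The lemma therefore needs an implicit positivity assumption (e.g.\ $w_v^->0$) or a separate treatment of zero-weight vertices; your proof should at minimum state that assumption explicitly rather than paper over the case with an incorrect justification.
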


\begin{definition}[Dominant Subtrees and Branches]\ \\
\label{def:dom}
$P_i \in \calP$ is  a {\em dominant subtree for $(\calP,X)$ under scenario $s$ } if
$$ \Theta(\calP,X:s) = \Theta(P_i,x_i:s).$$
For any $P_i$, a  {\em dominant branch of $P_i$ falling off of $x_i$} is a branch $V_j$ of $P_i$ falling off of $x_i$ such that 
$$\Theta(P_i,x_i:s) = \Theta(V_j \cup\{x_i\},x_i:s).$$
\end{definition}
Note that, by this definition, if
 $s^*\in \calS$ is a worst-case scenario  for $(\calP,X)$ and $P_i \in \calP$ is  a dominant subtree for $(\calP,X)$ under scenario $s^*$ then 
%\begin{eqnarray*}
%\Rmax(\calP,X) &=& \Theta(\calP,X: s)  - \Thetaopt(s)\\
%                         &= & \left( \max_{1 \le i \le k} \Theta(P_i,x_i)\right) - \Thetaopt(s)\\
%                         &=  &\Theta(P_i,x_i: s)  - \Thetaopt(s).
%\end{eqnarray*}
$$\Rmax(\calP,X) = \Theta(\calP,X: s^*)  - \Thetaopt(s^*) = \Theta(P_i,x_i: s^*)  - \Thetaopt(s^*).$$
Furthermore, if $V_j$ is a dominant branch of  that $P_i$ falling off of $x_i$ under $s$ then
$$\Rmax(\calP,X) = \Theta(P_i,x_i: s^*)  - \Thetaopt(s^*) = \Theta(V_j \cup\{x_i\},x_i:s) - \Thetaopt(s^*).$$

\begin{figure}[t]
	\centering
\includegraphics[width=3in]{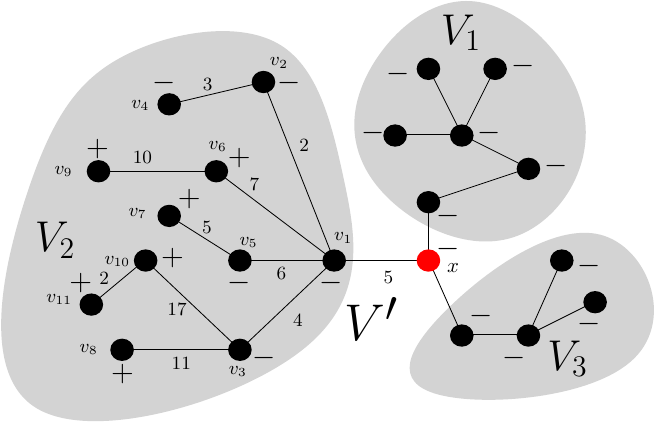}
\caption{Illustration of Definition  \ref {def:S*tree}.   The edge lengths are only explicitly given for the subtree $V_2$. For exposition, the $v_i$ in $V_2$ are labelled so that $i$ increases as $d(v_i,x)$ increases.   
The scenario $s^*=s^*(V_2,x,v_6)$ is illustrated.  Note that 
$D(V_2,x,v_6)=\{v_6,v_7,v_8,v_9,v_{10},v_{11}\}$, i.e., the set of vertices $v' \in V_2$ such that $d(x,v')\ge d(x,v_6) = 12.$
The vertices $v' \in D(V_2,x,v_6) $ are labelled with a ``+'  to signify that    $s^*(`v) = w^+_{v'}.$  The  vertices  $v \in V\setminus  D(V_2,x,v_6)$ are  labelled with a ``-'' to signify that   $s^*(v) = w^-_v.$
}
	\label{fig:DV}
	\end{figure}

The major observation will be that there 
is  always  a ``worst-case'' scenario in which the weight of each vertex is either at the lower or the upper end  of its range and only a small number of such scenarios need to be considered. Proving this  requires introducing new notation.

\begin{definition}\ \\
\label{def:S*tree}
\begin{enumerate}
\item   Let $V' \subseteq V$ ($V'$ is not necessarily a tree). Set
$s^*(V')$  to be the scenario such that
$$
w_v(s^*(V'))
=
\left\{
\begin{array}{cc}
w^-_v  & \mbox{if  $v \not\in V'$}\\
w^+_v  & \mbox{if $v \in V'.$}
\end{array}
\right.
$$
Note that,  $\forall v \in V,$  $w_v(s^*(\emptyset)) = w^-_v.$
\item  Let $T'=(V',E')$ be a subtree with $x \in V'$ and  $V_1,...,V_t$  be  the branches of $V'$ falling off of $x.$  Now set (Fig.~\ref{fig:DV})
\begin{equation}\label{eq:scendefs}
\calS^*(V',x) = s^*(\emptyset)   \cup \bigcup_{j=1}^t  \left\{  s^*(D(V_j,x,v')) \,:\, v' \in V_j  \right\}
\end{equation}
where  $D(V_j,x,v')$ is as introduced in Definition \ref{def:DW}.
%For every $v' \in V,$  define
%%$D(V',x,v)$ to be  the set of vertices in $V'$  that are in the same subtree as $v$ and at last as far from $x$ as $v:$
%$$
%D(V',x,v) =
%\left\{
%\begin{array}{cc}
%\{v' \in V_i \,:\, d(v',x) \ge d(v,x) \} & \mbox{if $v \in V_i$},\\
%\emptyset & \mbox{if $v = x$.}  
%\end{array}
%\right.
%$$
%
%$$D(V',x,v) = \{v' \in V' \,:\, d(v',x) \ge d(v,x)       \mbox{ and  $v,v'$ are in the same subtree $V_i$ of $V'$}  \}.$$
%In
\item   Set 
$$\calS^*(\calP,X) = \bigcup_{i=1}^k \calS^*(P_i,x_i).$$
\end{enumerate}
\end{definition}

Note that $\calS^*(V',x) $ contains at most one scenario associated with each $v \in V'$ (associate  $s^*(\emptyset)$ with $x$)  and thus $|\calS^*(V',x) | \le |V'|$ and $|\calS^*(\calP,X)| \le n.$
The main result is

\begin{lemma}
\label{lem:wcr}
Let $s \in \calS$ be a worst case scenario for $(\calP,X)$  and $P_i \in \calP$ be   a dominant subtree for $(\calP,X)$ under scenario $s$. 
Furthermore, let  $V_j$ be  a dominant branch in $P_i$ falling off of $x_i$ under $s$.

Then there exists 
$s^* \in \calS^*(V_j \cup \{x_i\},x_i) \subseteq  \calS^*(P_i,x_i) \subseteq \calS^*(\calP,X)$ satisfying
\begin{enumerate}
\item   $s^*$  is also a  worst case scenario for  $(\calP,X)$ and 
\item $P_i \in \calP$ is  a dominant subtree for $(\calP,X)$ under scenario $s^*$,
\item with  $V_j$ being  a dominant branch of $P_i$ falling off of $x_i$ under $s^*.$
\end{enumerate}
Furthermore
\begin{eqnarray}
\label{eq:rmresult}
\Rmax(\calP,X) = \Theta(\calP,X: s^*)  - \Thetaopt(s^*) &=& \Theta(P_i,x_i: s^*)  - \Thetaopt(s^*)\\
&=& \Theta(V_j\cup\{x_i\},x_i: s^*)  - \Thetaopt(s^*). \nonumber
\end{eqnarray}
\end{lemma}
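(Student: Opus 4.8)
The plan is to start from the worst case scenario $s$, the dominant subtree $P_i$, and the dominant branch $V_j$, and to transform $s$ into a scenario $s^*\in\calS^*(V_j\cup\{x_i\},x_i)$ in two monotone steps that never decrease the regret. First I would invoke Lemma \ref{lem:unit flow} to write $\Theta(V_j\cup\{x_i\},x_i:s)=d(v',x_i)+W(V_j,x_i,v':s)/c$ for the particular node $v'\in V_j$ that achieves the max in Eq.~\ref{eq:naf}; call $\delta$ the value $d(v',x_i)$ and let $D=D(V_j,x_i,v')$ be the set of nodes at distance $\ge d(v',x_i)$ from $x_i$ inside $V_j$. The target scenario is $s^*=s^*(D)$: on $D$ the weights are raised to $w^+_v$, and everywhere else (including the rest of $V_j$, the rest of $P_i$, and all other blocks) they are lowered to $w^-_v$.

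The first step handles the nodes in $D$: define an intermediate scenario $s_1$ that agrees with $s$ off $D$ and equals $w^+_v$ on $D$. Raising weights only inside $D$ increases $W(V_j,x_i,v':\cdot)$ by exactly $\Delta:=\sum_{v\in D}(w^+_v-w_v(s))\ge 0$, so by Eq.~\ref{eq:naf} we get $\Theta(V_j\cup\{x_i\},x_i:s_1)\ge\Theta(V_j\cup\{x_i\},x_i:s)+\Delta/c$ (in fact $v'$ still achieves the max, since its bracketed value went up by $\Delta/c$ while no term involving a node at smaller distance than $v'$ was affected and terms for nodes at distance $\ge d(v',x_i)$ only help). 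Meanwhile, applying Lemma \ref{lem:dom3} one coordinate at a time along the nodes of $D$ gives $\Thetaopt(s_1)\le\Thetaopt(s)+\Delta/c$. Subtracting, $R(\calP,X:s_1)\ge R(\calP,X:s)$, and since $s$ was a worst case scenario this forces equality; moreover $P_i$ is still dominant for $(\calP,X)$ under $s_1$ and $V_j$ is still a dominant branch, because the quantity $\Theta(V_j\cup\{x_i\},x_i:\cdot)$ only increased while all other $\Theta(P_\ell,x_\ell:\cdot)$ (including $\Theta(V_{j'}\cup\{x_i\},x_i:\cdot)$ for other branches $j'$ of $P_i$) did not change — so the max defining $\Theta(\calP,X:\cdot)$ is still attained at $\Theta(V_j\cup\{x_i\},x_i:\cdot)$.

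The second step lowers all weights outside $D$ down to $w^-_v$, producing $s^*=s^*(D)$. For these coordinates, dropping a weight $w_v(s_1)$ to $w^-_v$ does \emph{not} change $W(V_j,x_i,v':\cdot)$ (since $v\notin D$), hence does not change $\Theta(V_j\cup\{x_i\},x_i:\cdot)$; and by part 2 of Lemma \ref{lem:dom2} it can only decrease $\Thetaopt$. Therefore $R(\calP,X:s^*)\ge R(\calP,X:s_1)=\Rmax(\calP,X)$, so again equality holds, $s^*$ is a worst case scenario, and — since $\Theta(V_j\cup\{x_i\},x_i:\cdot)$ was preserved while every other block term could only drop — $P_i$ remains dominant and $V_j$ remains a dominant branch under $s^*$. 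Finally, unwinding the three equalities $\Theta(\calP,X:s^*)=\Theta(P_i,x_i:s^*)=\Theta(V_j\cup\{x_i\},x_i:s^*)$ (each being an instance of property 6, property 5, respectively, at the scenario $s^*$) together with $R(\calP,X:s^*)=\Rmax(\calP,X)$ yields Eq.~\ref{eq:rmresult}. The containment $s^*=s^*(D)=s^*(D(V_j,x_i,v'))\in\calS^*(V_j\cup\{x_i\},x_i)\subseteq\calS^*(P_i,x_i)$ is immediate from Definition \ref{def:S*tree}.

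The main obstacle I anticipate is the bookkeeping in the first step: one must be careful that after raising the weights on $D$ the \emph{same} node $v'$ still realizes the maximum in Eq.~\ref{eq:naf} for $V_j$ — or, if a different node $v''$ takes over, that it must be one with $d(v'',x_i)\ge d(v',x_i)$ so that the resulting regret is still at least as large — and that $P_i$ does not lose dominance to some other branch or block whose evacuation time is unaffected. A clean way to sidestep the "which node achieves the max" subtlety is to argue purely with inequalities: $\Theta(V_j\cup\{x_i\},x_i:s_1)\ge d(v',x_i)+W(V_j,x_i,v':s_1)/c=\Theta(V_j\cup\{x_i\},x_i:s)+\Delta/c$ suffices for the regret bound, and the reverse inequality $\Theta(V_j\cup\{x_i\},x_i:s_1)\le\Theta(V_j\cup\{x_i\},x_i:s)+\Delta/c$ needed to keep $\Thetaopt$ matched follows because raising weights inside $D$ changes $W(V_j,x_i,u:\cdot)$ by at most $\Delta$ for every $u$ (and by $0$ for $u$ with $d(u,x_i)>d(v',x_i)$), so no bracketed term in Eq.~\ref{eq:naf} can grow by more than $\Delta/c$; combined with Lemma \ref{lem:dom3} this pins the regret exactly and preserves dominance. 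The analogous care is needed, but is easier, in the second step since there the evacuation time of the dominant branch is exactly unchanged.
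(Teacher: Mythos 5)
Your proposal is correct and follows essentially the same route as the paper: both transform $s$ into the single candidate scenario $s^*(D(V_j,x_i,v'))$ by monotone weight adjustments, using Lemma \ref{lem:unit flow} to track the dominant branch's evacuation time, Lemmas \ref{lem:dom2} and \ref{lem:dom3} to control $\Thetaopt$, and the cap $R(\calP,X:\cdot)\le \Rmax(\calP,X)$ to force equality at each stage; the paper merely performs the same adjustments in a different order (first lowering weights outside $V_j$, then lowering the near nodes of $V_j$, then raising the nodes of $D$ one at a time). The only slip is the parenthetical claim that terms for nodes closer to $x_i$ than $v'$ are unaffected when the weights on $D$ are raised (in fact they grow by exactly $\Delta/c$), but this is harmless because your inequality-based squeeze --- which is also what the paper relies on --- does not use that claim.
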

\begin{proof}
By definition, $\calP, X, s$ satisfy
$$\Rmax(\calP,X) = \Theta(P_{i},x_i:s) - \Thetaopt(s).$$

Now let $V_1,\ldots,V_t$ be the branches of $P_i$ hanging off of $x_i$ and let $V_j$ be a dominant branch of $P_i$ under $s.$ Then
$$\Theta(P_i,x_i:s)  = \max_{1 \le \ell \le t} \Theta(V_{\ell} \cup\{x_i\},x_i:s)=  \Theta(V_{j}\cup\{x_i\},x_i:s) $$

Recall from Lemma \ref{lem:unit flow}\  that
$$\Theta(V_j  \cup \{x_i\}, x_i: s') =
%\max_{ v' \in V_j: W(V_j,x_i,v':s') > 0} 
\max_{ v' \in V_j} 
\left(
d(v',x_i) +  \frac {W(V_j,x_i,v':s')} c  
\right).
$$
Let $ v^*\in V_j$ be any vertex (if ties occur, there might be many) 
% there might be many) 
 such that
\begin{equation}
\label{eq:barv def}\Theta(V_j  \cup \{x_i\}, x_i: s') =
d(v^*,x_i) +  \frac {W(V_j,x_i, v^*:s')} c .
\end{equation}

We claim that  $s^*=s^*(D(V_j,x_i,v^*))$ is a worst-case scenario for $(\calP,X)$ that satisfies conditions (1), (2) and (3). The correctness of this claim would immediately imply the lemma.  The proof of the claim follows in three parts.  

\begin{itemize}
\item[(a)] For all $v \not\in V_j$,  it will set $w_v(s) = w_v^-.$
\item[(b)] For all $v \in V_j$ with  $d(v,x_i) < d(v,x^*) $, it will set $w_v(s) = w_v^-.$
\item[(c)] For all $v \in V_j$ with  $d(v,x_i) \ge  d(v,x^*) $, it will set $w_v(s) = w_v^+.$
\end{itemize}
Each step will be shown to maintain  $s$ as  a worst-case scenario for $(\calP,X)$ that satisfies conditions (1), (2) and (3).  After all the modifications are made the resulting scenario is  $s^*=s^*(D(V_j,x_i,v^*))$, completing the proof of the claim.

\medskip

 \par\noindent\underline{(a) Reducing $w$ values outside of dominant branch:}\\
Define $s_a$ such that 
$$
\forall v \in V,\quad
w_v(s_a) = \left\{
\begin{array}{cc}
w^-_v & \mbox{if $v \not\in V_{j}$,}\\
 w_v(s)& \mbox{if $v\in V_{j}$.}  
\end{array}
\right.
$$

From Lemma \ref{lem:dom2} (1),  
% $$\forall i' \not=i,\quad\Theta(P_{i'},x_{i'}:s_a) \le \Theta(P_{i'},x_{i'}:s)
% \quad \mbox{ and } \quad 
% \forall \ell \not=j,\quad \Theta(V_{\ell}\cup \{x_i\},x_{i}:s_a) \le \Theta(V_{\ell}\cup \{x_i\},x_{i}:s).
% $$
$$\forall i' \not=i,\quad\Theta(P_{i'},x_{i'}:s_a) \le \Theta(P_{i'},x_{i'}:s)$$
and
$$\forall \ell \not=j,\quad \Theta(V_{\ell}\cup \{x_i\},x_{i}:s_a) \le \Theta(V_{\ell}\cup \{x_i\},x_{i}:s).
$$
Since $s$ and $s_a$ are identical within $V_{j},$   
$$\Theta(V_{j}\cup \{x_i\},x_{i}:s_a) = \Theta(V_{j}\cup \{x_i\},x_{i}:s).$$
Thus
$$\Theta(P_i,x_i:s_a)  = \max_{1 \le \ell \le t} \Theta(V_{\ell} \cup\{x_i\},x_i:s_a)=  \Theta(V_{j}\cup\{x_{i}\},x_i:s) = \Theta(P_i,x_i:s)$$
and
$$\Theta(\calP,X: s_a) = \max_{1 \le i' \le k}   \Theta(P_{i'},x_{i'}:s_a) = \Theta(P_i,x_i:s_a) = \Theta(P_i,x_i:s) = \Theta(\calP,X: s).$$

From Lemma \ref{lem:dom2} (2),  
$$\Thetaopt(s_a) \le \Thetaopt(s)$$ and thus
$$\Theta(\calP,X: s_a) - \Thetaopt(s_a)  \ge \Theta(\calP,X: s)   - \Thetaopt(s)  = \Rmax(\calP,X).$$
From Definition \ref{def:regret}, 
%By the definition of $\Rmax(\calP,X)$,  we also must have 
$\Theta(\calP,X: s_a) - \Thetaopt(s_a) \le \Rmax(\calP,X)$ and thus
$$\Rmax(\calP,X) = \Theta(P_{i},x_i:s_a) - \Thetaopt(s_a).$$

Thus  $s_a$ is also a worst case scenario for $(\calP,X)$  with  $P_i \in \calP$  a dominant subtree for $(\calP,X)$ under scenario $s_a$, with $V_j$ a dominant branch in $P_i$.

Before continuing to (b), set $s'=s_a.$

\medskip

\par\noindent\underline{(b) Reducing $w$ values inside  of dominant branch:}\\

Suppose there now exists some 
  $v' \in V_j$ such that 
  $d(v',x_i) < d(v^*, x_i)$ and  $w_{v'}(s') > w_v^-$. % and suppose that $w_{v'}(s') > w^-_{v'}$. 
 In this case, transform  $s'$ into $s''$ by setting 
 \begin{equation}
 \label{eq:wchange}
 \forall v \in V,\quad 
 w_v(s'') = \left\{
\begin{array}{cc}
w^-_v & \mbox{if $v = v'$,}\\
 w_{v}(s')& \mbox{if $v \not= v'$.}  
\end{array}
\right.
\end{equation}
% 
 %$w_{v'}(s'') = w^-_v$ and, for all $v \not = v'$, keeping $w_{v}(s'') = w_v(s')$ unchanged.
We now show that $s''$ will remain  a worst case scenario for   $(\calP,X)$ satisfying conditions (1), (2) and (3).

Because $w_v(s'') = w_v(s')$ for all $v \not\in V_j,$ we have
% $$\forall i' \not=i,\quad\Theta(P_{i'},x_{i'}:s'') = \Theta(P_{i'},x_{i'}:s')
% \quad \mbox{ and } \quad 
% \forall \ell \not=j,\quad \Theta(V_{\ell}\cup\{x_i\},x_{i}:s'')  = \Theta(V_{\ell}\cup\{x_i\},x_{i}:s').
% $$
$$\forall i' \not=i,\quad\Theta(P_{i'},x_{i'}:s'') = \Theta(P_{i'},x_{i'}:s')
$$
and
$$
\forall \ell \not=j,\quad \Theta(V_{\ell}\cup\{x_i\},x_{i}:s'')  = \Theta(V_{\ell}\cup\{x_i\},x_{i}:s').
$$
Now note that, for all $v \in V_j$, the weight change in (\ref{eq:wchange})  implies 
$$W(V_j,x_i,v:s'') =
\left\{
\begin{array}{ll}
W(V_j,x_i,v:s') - (w_{v'}(s') - w^-_{v'})  & \mbox{if $d(v,x_i) \le d(v',x_i)$},\\
W(V_j,x_i,v:s') 					&  \mbox{if $d(v,x_i) > d(v',x_i)$}.
\end{array}
\right.
$$
Since  $d(v^*,x_i) > d(v',x_i)$,  $W(V_j,x_i,v^*:s'')=W(V_j,x_i,v^*:s')$ and 
\begin{eqnarray*}
\Theta(V_j  \cup \{x_i\}, x_i: s'') &=&
%\max_{ v \in V_j: W(V_j,x_i,v:s'') > 0} 
\max_{ v \in V_j} 
\left(
d(v,x_i) +  \frac {W(V_j,x_i,v:s'')} c  
\right)\\
&=&d(v^*,x_i) +  \frac {W(V_j, x_i, v^* :s'')} c\\
&=&d(v^*,x_i) +  \frac {W(V_j, x_i, v^*:s')} c\\
&=& \Theta(V_j  \cup \{x_i\}, x_i: s').
\end{eqnarray*}

Thus, exactly as in (a), 
$$\Theta(P_i,x_i:s'')  = \max_{1 \le \ell \le t} \Theta(V_{\ell} \cup\{x_i\},x_i:s'')=  \Theta(V_{j}\cup\{x_{i}\},x_i:s') = \Theta(P_i,x_i:s')$$
and
$$\Theta(\calP,X: s'') = \Theta(P_i,x_i:s'')  = \Theta(P_i,x_i:s') = \Theta(\calP,X: s').$$

Again from Lemma \ref{lem:dom2} (2),  
$$\Thetaopt(s'') \le \Thetaopt(s')$$ and using the same argument as in (a),
$$\Rmax(\calP,X) = \Theta(P_{i},x_i:s'') - \Thetaopt(s'').$$
Thus   $s''$ is also a worst case scenario for $(\calP,X)$  with  $P_i \in \calP$  a dominant subtree for $(\calP,X)$ under scenario $s''$, with $V_j$ a dominant branch in $P_i$, i.e., satisfying conditions (1), (2) and (3).

Now, set $s'=s''$. 
As long as there exists  some  $v' \in V_j$ such that 
  $d(v',x_i) < d(v^*, x_i)$ and  $w_{v'}(s') > w_v^-$,
the argument above can be repeated.  Since this process only reduces weights and never increases them, it must terminate after a finite number of steps.
Let $s_b$ be the scenario after the process terminates.

By construction,  $s_b$ is  a worst case scenario for $(\calP,X)$ satisfying conditions (1), (2) and (3).  Furthermore
if  $ v \not\in V_j$  or  $v \in V_j$ but $d(v,x_i) < d(v*,x_i)$,  then $w_v(s_b) = w_v^-.$

Before continuing to (c), set $s'=s_b.$
\medskip

\par\noindent\underline{(c) Increasing  $w$ values inside  of dominant branch:}\\
%Let  $v^*$ be as defined in Eq.~(\ref{eq:barv def}) from (b) but now  let $v' \in V_j$ be any vertex such that $d(v',x_i) \ge  d(v^*, x_i)$.

Suppose  there now exists some  $v' \in V_j$ such that 
  $d(v',x_i) \ge  d(v^*, x_i)$ and  $w_{v'}(s') < w_v^+$, % and suppose that $w_{v'}(s') > w^-_{v'}$. 
Then  transform  $s'$ into $s''$ by setting 
 $$
 \forall v \in V,\quad
 w_v(s'') = \left\{
\begin{array}{cc}
w_v(s')& \mbox{if $v \not= v'$,}\\
 w^+_{v'}& \mbox{if $v= v'$.}  
\end{array}
\right.
$$
%setting $w_{v'}(s) = w^+_{v'}$ and for all $v \not = v'$, keeping $w_{v}(s'') = w_v(s')$ unchanged.
We now show that $s''$ will still remain  a worst case scenario for $(\calP,X)$  satisfying conditions (1), (2) and (3).

Similar to  part (b),  since $w_v(s'') = w_v(s')$ for all $v \not\in V_j$ we have 
% $$\forall i' \not=i,\quad\Theta(P_{i'},x_{i'}:s') =\Theta(P_{i'},x_{i'}:s'')
% \quad \mbox{ and } \quad 
% \forall j' \not=j,\quad \Theta(V_{j'}\cup\{x_i\},x_{i}:s') = \Theta(V_{\ell}\cup\{x_i\},x_{i}:s'').
% $$
$$\forall i' \not=i,\quad\Theta(P_{i'},x_{i'}:s') =\Theta(P_{i'},x_{i'}:s'')
$$
and
$$
\forall j' \not=j,\quad \Theta(V_{j'}\cup\{x_i\},x_{i}:s') = \Theta(V_{\ell}\cup\{x_i\},x_{i}:s'').
$$

Now note that for all $v \in V_j$, the weight change implies 
$$W(V_j,x_i,v:s'') =
\left\{
\begin{array}{ll}
W(V_j,x_i,v:s') + ( w^+_{v'} - w_{v'}(s'))  & \mbox{if $d(v,x) \le d(v',x)$},\\
W(V_j,x_i,v:s') 					&  \mbox{if $d(v,x) > d(v',x)$}.
\end{array}
\right.
$$
Thus  $W(V_j,x_i,v^*:s'')=W(V_j,x_i,v^*:s')+ ( w^+_{v'} - w_{v'}(s'))$ and 
\begin{eqnarray*}
\Theta(V_j  \cup \{x_i\}, x_i: s'') &=&
%\max_{ v \in V_j: W(V_j,x_i,v:s'') > 0} 
\max_{ v \in V_j} 
\left(
d(v,x_i) +  \frac {W(V_j,x_i,v:s'')} c  
\right)\\
&=&d(v^*,x_i) +  \frac {W(V_j, x_i, v^* :s'')} c\\
&=&d(v^*,x_i) +  \frac {W(V_j, x_i, v^*:s')} c  + \frac {( w^+_{v'} - w_{v'}(s'))}{c} \\
&=& \Theta(V_j  \cup \{x_i\}, x_i: s') + \frac {( w^+_{v'} - w_{v'}(s'))}{c}.
\end{eqnarray*}

%TODO add description of second step 

Then,  
$$\Theta(P_i,x_i:s'')  = \max_{1 \le \ell \le t} \Theta(V_{\ell} \cup\{x_i\},x_i:s'')= 
\Theta(P_i,x_i:s')+ \frac {( w^+_{v'} - w_{v'}(s'))}{c} $$
and
$$\Theta(\calP,X: s'') = \Theta(P_i,x_i:s'')  = \Theta(\calP,X: s') + \frac {( w^+_{v'} - w_{v'}(s'))}{c}.$$

From Lemma \ref {lem:dom3},
$$\Thetaopt(s'')\le  \Thetaopt(s') + \frac {( w^+_{v'} - w_{v'}(s'))}{c}.$$

Thus,
\begin{eqnarray*}\Theta(\calP,X: s'') - \Thetaopt(s'') 
 &=&\left( \Theta(\calP,X: s') +  \frac {( w^+_{v'} - w_{v'}(s'))} c \right)    - \Thetaopt(s'')\\
 &\ge&  \left(\Theta(\calP,X: s') + \frac {( w^+_{v'} - w_{v'}(s'))} c \right)    \\
 & & \hspace*{.5in} - \left(\Thetaopt(s')+  \frac {( w^+_{v'} - w_{v'}(s'))} c\right)\\
 &=&  \Theta(\calP,X: s')  - \Thetaopt(s') =\Rmax(\calP,X).
\end{eqnarray*}

The definition of $\Rmax(\calP,X)$ guarantees that  $\Theta(\calP,X: s'') - \Thetaopt(s'') \le \Rmax(\calP,X)$ and thus
$$\Rmax(\calP,X) = \Theta(P_{i},x_i:s'') - \Thetaopt(s').$$

Thus $s''$ is again  a worst case scenario for $(\calP,X)$  with  $P_i \in \calP$  a dominant subtree for $(\calP,X)$ under scenario $s''$, with $V_j$ a dominant branch in $P_i$.

Now, set $s'$ to be $s''$. 
As long as there exists  some  $v' \in V_j$ such that 
  $d(v',x_i) \ge  d(v^*, x_i)$ and  $w_{v'}(s') < w_v^+$,
the argument above can be repeated.  Since this process only increases  weights and never reduces  them, 
 it must terminate after a finite number of steps
 
Let $s_c$ be the scenario with which it terminates.
  By construction,  $s_c$ is  a worst case scenario for $(\calP,X)$ satisfying conditions (1), (2) and (3).

Furthermore, also by construction,  $s_c =s^*(D(V_j,x_i,v^*))$.  Thus, we have proven the claim with $s^* = s_c.$

%
%
%\medskip
%
%\par\noindent\underline{(d) Wrapping up:}\\
%After applying (a) followed by (b) followed by (c) 
%the  final $s'$ constructed is exactly 
%$$s' = s^*(D(P_i, x_i,v^*)).$$
%Since this $s' \in \calS^*(V_j \cup \{x_i\},x_i) \subseteq  \calS^*(P_i,x_i)$ and satisfies properties (1) and (2) required by the Lemma, the proof is complete.

\end{proof}

\medskip

\section{The Local Max-Regret Function}
\label{sec:local-max}

%\begin{definition}
%
%\end{definition}
%\begin{definition}
%Let $T=(V,E)$ be a tree and $s\in \mathcal {S}$ a scenario.  Let $f(T,v,s)$ be the cost of servicing $T$  under scenario $s$ and  $F^k(T)$ to be the optimal minax cost of servicing $T$ with $k$ sinks under scenario $s.$
%\end{definition}

\begin{definition}
Let $T=(V,E)$ be a tree, $v \in V$, $T'=(V',E')$ a subtree of $T$ and $x \in V'.$ 
 Define the  {\em relative max-regret function} as
$$r(V',x) = \max_{s\in \calS^*(V', x)} 
\left\{
\Theta(V',x: s) - \Thetaopt(s)
\right\}
$$
Let $X \subseteq V$ and  $\calP\in \Lambda[X].$  Set
\begin{equation} r(\calP,X) =  
 \max_{1 \le i \le k} r(P_{i},x_i).
 \label{eq:MMregret}
\end{equation}
%and
%\begin{equation}
% r(X) = \min_{\calP \in \Lambda(X)} r(\calP,X).
%\end{equation}
\end{definition}

Recall  that $|\calS^*(V',x)|  \le |V'|$.% and can therefore be  efficently calculated which will permit efficiently calculating  $r(\calP,X)$. 
This will permit efficiently calculating  $r(\calP,X)$. 
Surprisingly,  even though $r(\calP,X)$  is a locally defined function, it encodes enough information to fully calculate the  global value $\Rmax(\calP,X)$.

\bigskip

\begin{lemma} [$\emptyr$ is {\em almost} min-max monotone]
\label{lem:almostreg}\ \\
\begin{itemize}
\item[(a)] The function $\emptyr$  satisfies properties 1  and 3--6 of Definition \ref{def:MMM}.
\item[(b)] Let $X \subseteq V$ and  $\calP\in \Lambda[X].$    Then
$$ %\forall X \subseteq V,\,  \calP \in \Lambda[X],\quad
r(\calP,X)   % =\max_{s \in \calS} \left(\Theta(\calP,X: s)  - \Thetaopt(s)\right)
 = \Rmax(\calP,X)
$$
\item[(c)] \   
$$
\Ropt =  \min_{(\calP,X):  |X| = k}R_{\max}(\calP,X) =
         \min_{(\calP,X):  |X| = k} r(\calP,X)
$$
\end{itemize}
\end{lemma}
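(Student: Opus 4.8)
The plan is to prove the three parts in order, with part (b) being the technical heart and parts (a) and (c) following quickly once (b) is in hand.

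\textbf{Part (a).} I would verify properties 1, 2, 4, 5, 6 directly from the definition of $r(V',x)$ and the fact that $\Theta$ already satisfies them. Property 1 (Tree Inclusion) holds because if $V'$ is not a subtree or $x\notin V'$ then $\Theta(V',x:s)=\infty$ for every $s$, so the max is $\infty$. Property 2 (free self-service) holds because when $V'=\{x\}$ we have $\calS^*(\{x\},x)=\{s^*(\emptyset)\}$ and $\Theta(\{x\},x:s)=0$, so $r(\{x\},x) = -\Thetaopt(s^*(\emptyset)) \le 0$; combined with Lemma~\ref{lem:pos}-type reasoning, or more simply noting that regret of the trivial tree is $0$ (the single node with its own sink is itself an optimal-evacuation candidate... actually here I would argue $r(\{x\},x)\le 0$ and that one can always do at least as well, so $=0$ — I should double-check the sign but $\Thetaopt$ of the all-minimum scenario restricted to one node evacuating to itself gives $0$). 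Property 4 (Path Monotonicity) and Property 5 (Max Tree Composition) are inherited: for Property 5, the key observation is that $\calS^*(V',x) = s^*(\emptyset)\cup\bigcup_j\{s^*(D(V_j,x,v')):v'\in V_j\}$ is built branch-by-branch, and $\Theta(V',x:s)=\max_i\Theta(V_i\cup\{x\},x:s)$, so taking the max over $\calS^*(V',x)$ of $\Theta(V',x:s)-\Thetaopt(s)$ decomposes into the max over branches — this needs a short argument that the maximizing scenario for a branch can be taken to be one of the branch-specific scenarios, which is essentially Lemma~\ref{lem:wcr} applied locally. Property 6 is then just the definition in Eq.~(\ref{eq:MMregret}).

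\textbf{Part (b).} This is where Lemma~\ref{lem:wcr} does the work. For the inequality $r(\calP,X)\le\Rmax(\calP,X)$: every scenario $s\in\calS^*(P_i,x_i)\subseteq\calS$ is a legal scenario, so $\Theta(P_i,x_i:s)-\Thetaopt(s)\le\Theta(\calP,X:s)-\Thetaopt(s)\le\Rmax(\calP,X)$ by Property~6 of $\Theta$ and the definition of $\Rmax$; taking the max over $s\in\calS^*(P_i,x_i)$ and then over $i$ gives $r(\calP,X)\le\Rmax(\calP,X)$. For the reverse inequality, take a worst-case scenario $s$ for $(\calP,X)$, let $P_i$ be a dominant subtree and $V_j$ a dominant branch; Lemma~\ref{lem:wcr} produces $s^*\in\calS^*(V_j\cup\{x_i\},x_i)\subseteq\calS^*(P_i,x_i)$ with $\Rmax(\calP,X)=\Theta(P_i,x_i:s^*)-\Thetaopt(s^*)$. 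Hence $r(P_i,x_i)\ge\Theta(P_i,x_i:s^*)-\Thetaopt(s^*)=\Rmax(\calP,X)$, so $r(\calP,X)\ge r(P_i,x_i)\ge\Rmax(\calP,X)$. Combining, $r(\calP,X)=\Rmax(\calP,X)$.

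\textbf{Part (c).} Immediate from (b): taking the minimum of equal quantities over all $(\calP,X)$ with $|X|=k$ gives $\Ropt = \min_{(\calP,X)}\Rmax(\calP,X) = \min_{(\calP,X)} r(\calP,X)$.

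\textbf{Main obstacle.} The only real difficulty is part (a), specifically checking Property~5 for $r$, because $r$ involves an optimization over the scenario set $\calS^*(V',x)$ and one must confirm that this set is "rich enough" that $\max_{s\in\calS^*(V',x)}(\Theta(V',x:s)-\Thetaopt(s)) = \max_i \max_{s\in\calS^*(V_i\cup\{x\},x)}(\Theta(V_i\cup\{x\},x:s)-\Thetaopt(s))$. The $\ge$ direction is easy since $\calS^*(V_i\cup\{x\},x)\subseteq\calS^*(V',x)$ (up to the structure of the branch decomposition) and $\Theta(V',x:s)\ge\Theta(V_i\cup\{x\},x:s)$... wait, that gives the wrong direction — I'd actually need that on the scenarios coming from branch $i$, namely $s^*(D(V_i,x,v'))$, the quantity $\Theta(V',x:s)$ equals $\Theta(V_i\cup\{x\},x:s)$, which holds because such a scenario puts all other branches at their minimum weights and the dominant branch under it is $V_i$ (this is exactly the content of part (c) of the proof of Lemma~\ref{lem:wcr}). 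The $\le$ direction follows from the argument in Lemma~\ref{lem:wcr} that any worst-case scenario can be pushed to one of the $\calS^*$ scenarios. So part (a)'s Property~5 is really a corollary of the machinery already built in Section~\ref{sec:wcs}, and I would present it as such rather than re-deriving it. Properties~1, 2, 4 are routine, and 6 is by definition.
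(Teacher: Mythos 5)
Your proposal follows the paper's proof essentially step for step: part (b) is the two\--inequality argument with Lemma~\ref{lem:wcr} supplying the hard direction, part (c) is immediate, and part (a) reduces Property~5 to showing that the maximization over $\calS^*(V',x)$ decomposes branch by branch. Two wrinkles in your write\--up are worth fixing. In your ``main obstacle'' paragraph you talk yourself out of a correct step: the inequality $\Theta(V',x:s)\ge\Theta(V_\ell\cup\{x\},x:s)$, applied to the maximizing $s\in\calS^*(V_\ell\cup\{x\},x)\subseteq\calS^*(V',x)$, already gives $r(V',x)\ge\max_\ell r(V_\ell\cup\{x\},x)$, which is the correct ``$\ge$'' direction. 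The replacement claim you substitute --- that $V_\ell$ is the \emph{dominant} branch under every scenario in $\calS^*(V_\ell\cup\{x\},x)$ --- is false in general (a large branch held at its minimum weights can still dominate a small branch $V_\ell$), but fortunately it is not needed. For the ``$\le$'' direction the paper proves the cleaner statement, its Eq.~(\ref{eq:sim}): for the fixed branch quantity $\Theta(V_\ell\cup\{x\},x:s)-\Thetaopt(s)$, any scenario drawn from another branch's sub\--collection is dominated by $s^*(\emptyset)$, because it agrees with $s^*(\emptyset)$ on $V_\ell$ while having $\Thetaopt$ at least as large. This is a one\--line analogue of step~(a) of the proof of Lemma~\ref{lem:wcr}, not the full machinery, and you should state it explicitly rather than gesture at Lemma~\ref{lem:wcr}.

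Your hesitation about Property~2 is also justified. Since $\calS^*(\{x\},x)=\{s^*(\emptyset)\}$ and $\Thetaopt$ is the \emph{global} $k$\--sink optimum over all of $T$ (not a quantity restricted to $V'$), one gets $r(\{x\},x)=-\Thetaopt(s^*(\emptyset))$, which is generically strictly negative; Property~2 as literally stated therefore fails for $\emptyr$ and is only restored by the $\max(\cdot,0)$ patch of Lemma~\ref{lem:Final}. The paper itself exhibits exactly this negativity immediately after the lemma when discussing Property~3, so the blemish is in the paper's statement (and its claim that Property~2 is ``invariant under subtraction of a constant''), not in your reasoning. Neither issue affects parts (b) and (c), which you prove exactly as the paper does.
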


\begin{proof}

(a) For any fixed $s \in \calS$ set 
%\begin{eqnarray*}
%r_1(V',x:s) &=& \Theta(V',x:s)\\
$$r'(V',x:s)             =  \Theta(V',x:s) - \Thetaopt(s).$$ %= r_1(V',x:s) - \Thetaopt(s).
%\end{eqnarray*}
For fixed scenario $s,$  $\emptytheta$ is minmax monotone.
Properties 1,3 and 4 all remain invariant under the subtraction of a constant, so $\emptyrp$  also satisfies properties 1, 3 and 4.
Since
$r(V',x) = \max_{s\in \calS^*(V',x)} r'(V',x:s)$ and properties 1, 3 and 4 also remain invariant under taking maximum,  $\emptyr$ also satisfies properties 1, 3 and 4.

\medskip

For  property 5, let $T' = (V',E')$ be a subtree of $T$, $x \in V'$  and $V_1,...,V_t$ the branches of $V'$ falling off of $x.$
We first claim that 
\begin{equation}
\label{eq:sim}
\forall \ell,\quad \max_{s \in \calS^*(V_\ell\cup\{x\},x)}  r'(V_\ell\cup \{x\},x:s) = 
\max_{s \in \calS^*(V',x)} r'(V_\ell \cup \{x\},x:s).
\end{equation}
Suppose not.  Then, for some $\ell$, there exists  $\ell' \not=\ell$  and   
$s^* \in \calS^*(V_{\ell'}\cup\{x\},x)$   such that 
\begin{equation}
\label{eq:rpcont}
  r'(V_\ell\cup \{x\},x:s^*) >
\max_{s \in  \calS^*(V_\ell\cup\{x\},x)} r'(V_\ell \cup \{x\},x:s).
\end{equation}

Since $\ell \not=\ell'$, for all $v' \in V_\ell,$  $w_{v'}(s^*) = w^-_{v'}  = w_{v'}(s^*(\emptyset))$ and thus
\begin{equation}\label{eq:rpcont2}
\Theta(V_\ell\cup \{x\},x:s^*(\emptyset)) = \Theta(V_\ell\cup \{x\},x:s^*).
\end{equation}
Furthermore,  by Lemma \ref{lem:dom2} (2), $\Thetaopt(s^*(\emptyset)) \le \Thetaopt(s^*).$ 
Since $s^*(\emptyset) \in \calS^*(V_\ell\cup\{x\},x),$ from Eq.~(\ref{eq:rpcont2}),
\begin{eqnarray*}
\max_{s \in  \calS^*(V_\ell\cup\{x\},x)} r'(V_\ell \cup \{x\},x:s) &\ge & r'(V_\ell\cup \{x\},x:s(\emptyset)) \\
 &=& \Theta(V_\ell\cup \{x\},x:s^*(\emptyset)) -\Thetaopt(s^*(\emptyset)\\
&\ge& \Theta(V_\ell\cup \{x\},x:s^*) -\Thetaopt(s^*)\\
&=&  r'(V_\ell\cup \{x\},x:s^*)
\end{eqnarray*}
contradicting (\ref{eq:rpcont}). Thus  (\ref{eq:sim}) is proved. Next note
%We now plug (\ref{eq:sim}) into the below:
\begin{eqnarray*}
r(V',x) &=& \max_{s \in \calS^*(V',x)} \left(  \Theta(V',x: s) - \Thetaopt(s)    \right)\\
          &=& \max_{s \in \calS^*(V',x)} \left(\max_{1\leq  \ell\leq t}   \Theta (V_\ell \cup \{x\},x:s)  - \Thetaopt(s) \right)\\
          &=& \max_{s \in \calS^*(V',x)} \left(\max_{1\leq  \ell\leq t} r'(V_\ell \cup \{x\},x:s). \right)\\
          &=&   \max_{1\leq  \ell\leq t} \left( \max_{s \in \calS^*(V',x)} r'(V_\ell \cup \{x\},x:s)\right)\\
        &=&   \max_{1\leq \ell \leq t} \left( \max_{s \in \calS^*(V_i\cup\{x\},x)} r'(V_\ell \cup \{x\},x:s)\right)   \hspace*{.2in} (\mbox{From Eq.~\ref{eq:sim}})\\
          &=&   \max_{1\leq  \ell\leq t} r\left(V_\ell\cup \{x\},x\right).
\end{eqnarray*}
$\emptyr$ therefore satisfies Property 5. Property 6 follows by definition.

\medskip

(b)
\begin{eqnarray*}
 \Rmax(\calP,X) &=& \max_{s \in \calS} \Bigl(\Theta(\calP,X: s)  - \Thetaopt(s)\Bigr) \\
 			&=&  \max_{s \in \calS} \left( \max_{1 \le i \le k} \Theta(P_{i},x_i: s)  - \Thetaopt(s) \right)\\
		         &=&  \max_{s \in \calS} \left( \max_{1 \le i \le k} \Bigl( \Theta(P_{i},x_i: s)  - \Thetaopt(s) \Bigr)\right)\\
 			&=&  \max_{s \in \calS} \left( \max_{1 \le i \le k}  r'(P_{i},x_i:s)\right)\\
 			&=& \max_{1 \le i \le k}    \left( \max_{s \in \calS} r'(P_{i},x_i:s)\right)\\
			&\ge& \max_{1 \le i \le k}    \left( \max_{s \in \calS^*(P_i,x)} r'(P_{i},x_i:s)\right)    \hspace*{.2in} (\mbox{Because $\calS^*(P_i,x) \subseteq \calS$  })\\
 			&=& \max_{1 \le i \le k} r(P_{i},x_i)  \\
                          & = & r(\calP,X)
\end{eqnarray*}

From  Lemma \ref{lem:wcr} we know there exists  a $s^* \in S^*(V',x)$  and $P_i \in \calP$  such that 
$$\Rmax(\calP,X)  = \Theta(P_i,x_i: s^*)  - \Thetaopt(s^*)$$
Thus
$$  \Rmax(\calP,X)   \le  r(P_i,x_i) \le r(\calP,X)$$
and $  \Rmax(\calP,X)  =r(\calP,X).$

(c)  Follows directly from (b).

\end{proof}

The previous lemma states that $\emptyr$ satisfies all of the properties of a  minmax monotone function EXCEPT for property 2.  Property 2 may be   violated since it is  quite possible that, for any particular $V'$,  that $r(V',x) < 0.$  As an example, suppose that $V'={x}$, a singleton node.  Since $\Theta(\{x\},x) =0$,
$$r(V',x) = \max_{s\in \calS^*(V',x)} 
\left\{
\Theta(V',x: s) - \Thetaopt(s)
\right\}
= -\max_{s\in \calS^*(V',x)}
\left\{
 \Thetaopt(s)
\right\}
$$
which other than in some special cases will be negative. Because of this $\emptyr$ is not minmax monotone and Theorem  \ref{Thm:Di} can't be directly applied. This can be easily patched, though.

\begin{lemma}
\label{lem:Final}
Let $T=(V,E)$ be a tree, $v \in V$ and $T'=(V',E')$ a subtree of $T$. 
Define the {\em local max-regret function} as 
$$\barr(V',x) = \max(r(V',x),\, 0).$$
Now let $X \subseteq V$ and  $\calP\in \Lambda[X].$  Set
$$ \barr(\calP,X) =  \max_{1 \le i \le k} \barr(P_{i},x_i).$$
Then
\begin{itemize}
\item [(a)]$ \emptybarr$ is a minmax monotone function.
\item[(b)] \  
$$
\Ropt =  
         \min_{(\calP,X):  |X| = k} \barr(\calP,X)
$$
Furthermore,  $R$ and $\barr$ have the same worst-case  evacuation protocols i.e.,  if $(\calP^*,X^*)$ are such that 
$$
 \min_{(\calP,X):  |X| = k}\barr(\calP,X) =\barr(\calP^*,X^*)
$$
then
$$\Ropt = R_{\max}(\calP^*,X^*).$$
\end{itemize}
\end{lemma}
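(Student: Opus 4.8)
The plan is to verify, in part (a), the six defining properties of a minmax monotone function for $\barr$ --- five of which are inherited almost for free from the known properties of $r$ (Lemma~\ref{lem:almostreg}(a)) by the fact that taking a pointwise maximum with the constant $0$ leaves $\infty$ fixed, preserves inequalities, and commutes with $\max$ --- and then, in part (b), to read off the statement in a couple of lines from Lemma~\ref{lem:almostreg}(b) and Lemma~\ref{lem:pos}. The only genuine work is Property 3.

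\emph{Part (a).} First, $\barr(V',x)=\max(r(V',x),0)\ge 0$ always. Properties 1, 4, 5, 6 for $\barr$ follow from the corresponding properties of $r$: $\max(\cdot,0)$ keeps the value $\infty$ (Property 1), keeps the inequality of Path Monotonicity (Property 4), and commutes with the ``$\max$'' of Max Tree/Partition Composition (Properties 5, 6 --- and Property 6 for $\barr$ is anyway true by definition). Property 2 holds because $\calS^*(\{x\},x)=\{s^*(\emptyset)\}$, so $r(\{x\},x)=\Theta(\{x\},x:s^*(\emptyset))-\Thetaopt(s^*(\emptyset))=-\Thetaopt(s^*(\emptyset))\le 0$, whence $\barr(\{x\},x)=0$. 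The substantive point --- and the whole reason for the $\max(\cdot,0)$ patch --- is Property 3 (Set Monotonicity), which is \emph{not} among the properties established for $r$ in Lemma~\ref{lem:almostreg}(a).

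\emph{Property 3.} Let $x\in V_1'\subseteq V_2'$. Since $\barr\ge 0$, there is nothing to prove unless $r(V_1',x)>0$, so assume this. Using Property 5 for $\barr$ I reduce to a single branch: pick a branch $A$ of $V_1'$ off $x$ with $\barr(A\cup\{x\},x)=\barr(V_1',x)$, and let $B\supseteq A$ be the branch of $V_2'$ off $x$ that contains $A$; then $\barr(V_2',x)\ge\barr(B\cup\{x\},x)$, so it suffices to prove $r(B\cup\{x\},x)\ge r(A\cup\{x\},x)$. Choose $s_A\in\calS^*(A\cup\{x\},x)$ attaining $r(A\cup\{x\},x)$. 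If $s_A=s^*(\emptyset)$ then $s_A\in\calS^*(B\cup\{x\},x)$ as well, and set-monotonicity of $\Theta$ for the fixed scenario $s_A$ gives $\Theta(B\cup\{x\},x:s_A)\ge\Theta(A\cup\{x\},x:s_A)$, which already yields the claim. Otherwise $s_A=s^*(D(A,x,v'))$ for some $v'\in A$; let $\bar v\in A$ attain the maximum of~\eqref{eq:naf} for $\Theta(A\cup\{x\},x:s_A)$ (so $W(A,x,\bar v:s_A)>0$), and set $s_B:=s^*(D(B,x,\bar v))\in\calS^*(B\cup\{x\},x)$. Since $D(A,x,\bar v)\subseteq D(B,x,\bar v)$, evaluating~\eqref{eq:naf} at $\bar v$ shows that $\Theta(B\cup\{x\},x:s_B)$ exceeds $\Theta(A\cup\{x\},x:s_A)$ by at least $\frac1c\bigl(\sum_{v\in D(B,x,\bar v)}w^+_v-\sum_{v\in D(A,x,\bar v)}w_v(s_A)\bigr)$. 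On the other hand $s_B$ is obtained from $s_A$ by first lowering some weights to $w^-$ (which cannot increase $\Thetaopt$, Lemma~\ref{lem:dom2}(2)) and then raising the weights on $D(B,x,\bar v)$ to $w^+$ (which increases $\Thetaopt$ by at most $\frac1c\sum_{v\in D(B,x,\bar v)}(w^+_v-w_v(s_A))$, Lemma~\ref{lem:dom3} applied once per vertex), and this second bound is at most the gain above because $D(A,x,\bar v)\subseteq D(B,x,\bar v)$ and all weights are nonnegative. Subtracting, $\Theta(B\cup\{x\},x:s_B)-\Thetaopt(s_B)\ge\Theta(A\cup\{x\},x:s_A)-\Thetaopt(s_A)=r(A\cup\{x\},x)$, hence $r(B\cup\{x\},x)\ge r(A\cup\{x\},x)$, as needed.

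\emph{Part (b).} By the definition of $\barr$ and Property 5, $\barr(\calP,X)=\max_i\max(r(P_i,x_i),0)=\max(r(\calP,X),0)$; by Lemma~\ref{lem:almostreg}(b) $r(\calP,X)=\Rmax(\calP,X)$, which is nonnegative by Lemma~\ref{lem:pos}, so in fact $\barr(\calP,X)=\Rmax(\calP,X)$ for every $(\calP,X)$ with $|X|=k$. Minimizing over all such pairs gives $\min\barr(\calP,X)=\min\Rmax(\calP,X)=\Ropt$, and since $\barr$ and $\Rmax$ agree pointwise on these pairs they have exactly the same minimizers, i.e., the same worst evacuation protocols. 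The main obstacle throughout is Property 3: $r$ is not available as set-monotone, so one must manufacture, for the larger subtree, a worst-case scenario out of the one for the smaller subtree, and the estimates of Lemmas~\ref{lem:dom2} and~\ref{lem:dom3} are precisely what keeps the increase in $\Thetaopt$ from outrunning the increase in evacuation time.
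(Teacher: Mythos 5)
Your proof is correct, and for part (a) it is genuinely more complete than the paper's own argument. The paper disposes of (a) in one line by citing Lemma~\ref{lem:almostreg}(a) and the definition of $\barr$, the surrounding discussion suggesting that negativity of $r$ is the only obstruction to Property~3, so that clamping at $0$ repairs it. But $\max(\cdot\,,0)$ does not by itself restore set monotonicity: one must still exclude the possibility that $0 < r(V'_2,x) < r(V'_1,x)$ for $x\in V'_1\subseteq V'_2$, and this is not automatic because $r$ is a maximum over the scenario set $\calS^*(V',x)$, which changes with $V'$ (the sets $D(V_j,x,v')$ grow with the branch, so the candidate scenarios for the smaller subtree are generally not members of $\calS^*$ for the larger one). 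Your argument --- reducing to a single branch via Property~5, transporting the maximizing scenario $s^*(D(A,x,v'))$ to $s^*(D(B,x,\bar v))$, and checking via Lemmas~\ref{lem:dom2} and~\ref{lem:dom3} that the resulting increase in $\Thetaopt$ is dominated by the increase in $\Theta$ read off from Eq.~(\ref{eq:naf}) at $\bar v$ --- is exactly the missing justification, and it is sound; it closely mirrors the scenario-modification technique of Lemma~\ref{lem:wcr}. Your part (b) and the remaining properties coincide with the paper's treatment. Two minor points worth making explicit: each branch of $V'_1$ falling off $x$ lies inside a unique branch of $V'_2$ falling off $x$ (which is what licenses the single-branch reduction), and in the degenerate case where all weights in the dominant branch are forced to zero the index set of the maximum in Eq.~(\ref{eq:naf}) may be empty; both are routine to handle.
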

\begin{proof}
(a) follows directly from Lemma  \ref{lem:almostreg} (a) and the definition of $ \emptybarr$.   (b) follows from Lemma \ref {lem:pos} and  Lemma  \ref{lem:almostreg} (b).
\end{proof}

\medskip

\section{The Algorithm}
\label{sec:alg}
%\section{The Algorithm}

Lemma \ref{lem:Final} shows that minmax regret can be expressed in terms of the {\em local} regret function $\emptybarr.$  
This section completes the proof of  Theorem \ref{thm:minmax} by  showing that plugging $\emptybarr$ into Theorem  \ref{Thm:Di} (from \cite{chengolin2018}) immediately yields the desired result.

\subsection{The Discrete Algorithm}
\label{subsec:DA}
In this subsection we continue assuming, as throughout  the paper until this point, that all sinks  must be located on vertices. 

Let  $T'=(V',E')$.   From Theorem \ref{thm: ksink},   $\Thetaopt(s)$ and $\Theta(V',x: s)$ can be calculated in 
$O( \max(k\log k,\log n)\,  k^2 n  \log^3 n)$ time
%
%$O( \max(k,\log n)\,  k n  \log^3 n)$ time 
%
for any fixed  scenario $s$.  Recall that $|S^*(V',x)| \le  |V'|.$
Thus
$$\bar r (V',x)  = \max \Bigl(0, \max_{s\in \calS^*(V',x)} 
\left\{
\Theta(V',x: s) - \Thetaopt(s)
\right\}\Bigr)
$$
can be evaluated in 
$t_{\calA}(n') = O(n'   \max(k \log k,\log n)\,  k^2 n  \log^3 n)$ 
time where $n'= |V'|.$  

\medskip

Since this $t_{\calA}(n')$ is subadditive,  combining Lemma \ref{lem:Final} and Theorem  \ref{Thm:Di}   immediately implies
that  the minmax regret value can be calculated in 
$$O\Bigl( \max(k\log k,\log n)\,  k^2\,  t_{\calA}(n)\,  \log^2 n\Bigr)   = O\Bigl( \max(k^2 \log^2 k,\log ^2n)\,  k^4 n^2  \log^5 n\Bigr)$$ 
time, completing the proof of  Theorem \ref{thm:minmax}, assuming that sinks must located be on vertices.

\medskip

We complete this section by noting that the subadditivity is crucial.  To clarify, we note that the oracle works on  {\em subtrees} $T'$  of the fixed  full tree $T$ where  $|V'|=n'$ and $|V|=n$.
  Its $\tilde\Theta(n' n)$ running time (hiding the dependence  on $k$ and polylog terms in $\log n$) reflects the fact that, to calculate $\barr(V',x)$, it  is calculating the cost of $n'$ worst-case scenarios of the full tree $T$, each scenario requiring $\tilde\Theta(n)$ time.
  
\medskip

To find the tree partition, the algorithm of Theorem  \ref{Thm:Di}  calls the oracle $O(n)$ times on a large collection of overlapping subtrees. Some of those calls are to very large subtrees and some to very small ones.  
But, using subadditivity, it is able to prove that the {\em sum of the sizes} of all of those subtrees is $\Theta(n)$. Thus the total combined size of all worst case scenarios it actually constructs  is  $\Theta(n)$ and their total cost  $\tilde\Theta(n^2)$. Without subadditivity, this argument would fall apart and the algorithm of Theorem  \ref{Thm:Di} would have to assume that each of the $O(n)$ oracle calls would be to a subtree of size $\Omega(n),$ each requiring $\tilde\Theta(n^2)$ time, with total running  time being 
%and would each require evaluating $\Omega(n)$ worst-case scenarios  leading to a 
$\tilde\Theta(n^3).$ 
%time algorithm.

\subsection{The Continuous Algorithm}
This section permits loosening the problem constraints  to allow sinks to be located anywhere on an edge in addition to being on vertices.
See Fig.~\ref{fig: Fig6}.

\cite{chengolin2018} provides  an extension of Theorem \ref{Thm:Di} that is also applicable to these {\em Continuous} minmax monotone problems.
%In the continuous version of these problems,  centers $x$ may be chosen to be anywhere on an edge $(u,v)$. 

% For $x \in (u,v)$ the distance functions  $d(u,x)$ and $d(v,u)$ must be continuous in $x$ and satisfy $d(u,x) + d(x,v)= d(u,v).$  
Some of the problem set up and definitions must then be naturally changed, e.g., in Definition \ref {def:Partitions}  and Properties 1-5 of  Section \ref{sec:minmax},
\begin{itemize}
\item $x \in  V'$ is replaced by $x \in T'$, i.e,  $x$  can  be a  vertex in $V'$  but it can also lie anywhere  on an edge in $E'.$
\item $X \subset V'$ is replaced by $X \subset T'.$
\item ``$x \not \in V'$ but $x$ a {\em neighbor} of $V'$'' is replaced by ``$x\not\in T'$  but there exists $u \in V'$, $v \not\in V'$ such that $(u,v) \in E$ and either $x = v$ or $x$ lies  in the interior of  the edge $(u,v)$''.
\item  Definition \ref{def:Partitions}(c) is extended so that if $x$ is internal to   edge $(u,v)$  then $x$   has exactly two branches $V_1,V_2$ falling off of it;  $V_1$  is the  subtree rooted at $u$ that does not contain $v$ and  $V_2$  is the subtree at $v$ that does not contain $u.$
\end{itemize}

\begin{figure}[t]
	\centering
\includegraphics[width=5in]{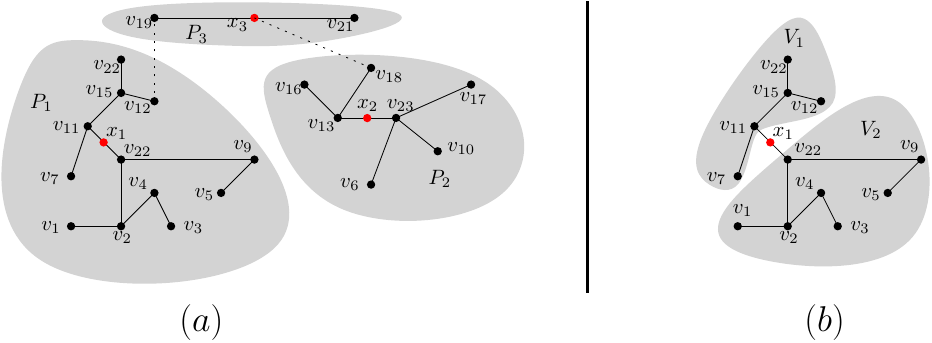}
\caption{An example of a solution of the continuous version of the problem.  In (a) the sinks $x_1$, $x_2$  and $x_3$ are on edges.  (b) further decomposes $P_1$ to  illustrate the fact that if a sink is on an edge $(u,v)$ then it has exactly two branches falling off of it.}
	\label{fig: Fig6}
\end{figure}

For consistency,  
the oracle $\calA$  extended to $x$ being on an edge must satisfy certain conditions.
 These are restated from \cite{chengolin2018} using the notation of this paper.
\begin{definition} (Fig.~\ref{fig:continuous})
\label{def:cont}
Let $T=(V,E)$ be  a tree and $\emptyf$ be a  minmax monotone cost function as defined at the beginning of Section \ref{sec:minmax}

For  $e=(u,v) \in E$,  orient $e$ so that it starts at $u$  and ends at $v$.  Let $V_u \subseteq V$ be a subtree of $T$ such that $u \in V_u$ but $ v \not\in V_u$ and  $x,x' \in e$.  Denote
\begin{eqnarray*}
x \le x'   &\quad \mbox{ if and only if} \quad  & \mbox{$x$ is on the path from $u$ to $x'$ }\\
x < x'   &\quad \mbox{ if and only if} \quad  & \mbox{ $x \le x'$  and  $x \not= x'.$}
\end{eqnarray*}
 $f(\cdot,\cdot)$ is  {\em continuous} if it satisfies:
\begin{enumerate}
\item  $f(V_u \cup \{x\}, x)$ is a continuous function in $\{x \,:\, u < x \le v\}.$
\item  $f (V_u\cup \{x\},x)$ is non-decreasing in $\{x \,:\, u \le x \le v\}$, i.e.,
$$\forall u \le x < x' \le v,\    f(V_u \cup \{x\}, x) \le  f (V_u(u)\cup \{x'\},x') .$$
\end{enumerate}
\end{definition}
Point 2 is the natural generalization of path-monotonicity.

\begin{figure}
	\centering
	\includegraphics[width=0.5\textwidth]{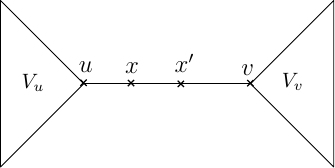}
	\caption{$V_u$, $V_v$ are, respectively, subtrees of $T$ rooted at $u$ and $v$.  Orient $(u,v)$ so that is starts at $u$ and ends at $v$.  Then $x < x'$. If the edge was oriented as ($v,u)$ then $x'< x'$  If $x \le x'$ then  $f(V_u \cup \{x\}, x) \le  f (V_v\cup \{x'\},x')$.}
		\label{fig:continuous}
\end{figure}

As  noted in \cite{chengolin2018},  $\emptytheta$ will naturally satisfy these conditions.
More specifically, let  $d(x,v)$ denote  the time required to travel from $x$ to $v$. It is natural to assume that this is a non-increasing continuous function in $x$ since flow travels smoothly without congestion {\em  inside} an edge. If the last flow  arrived at node $v$ at time $t$, then  it had arrived at  $x>u$ at time  $t-d(x,v)$. Thus 
\begin{equation}
\label{eq:dcontdef}
\Theta(V_u \cup \{x\},x:s) = f(V_u \cup \{v\},v:s) - d(x,v),
\end{equation}
so condition (1) is satisfied and condition (2) is satisfied for every $x$ except possibly $x=u.$
Now consider the time $t'$ that the last flow arrives at node $u$ and  let $t'+w$  be the time that this last flow {\em  enters} edge $(u,v)$.
Since flow doesn't encounter congestion inside  an edge, it arrives at $v$ at time   $t'+w + d(u,v).$ Then 
$$f(V_u,u) = t'  \le  t'+w =   (t'+w + d(u,v)) - d(u,v) =  \lim_{x \downarrow u}f(V_u \cup \{x\},x).$$ 
Thus condition (2) is also satisfied at $x=u.$  Note that $w >0$ only occurs if there is congestion at $(u,v)$ and this creates a left discontinuity, which is why 
 the range in condition (1) does  not include $x=u.$  

Since conditions (1) and (2) hold for every $\emptytheta$ they also hold for
\begin{equation}
\label{eq:ragain}
r(V',x) = \max_{s\in \calS^*(V',x)} 
\left\{
\Theta(V',x: s) - \Thetaopt(s)
\right\}
\end{equation}
and thus  for $\bar r(V',x) = \max(r(V',x),\, 0).$ 
That is,  $\bar r(V',x)$ is a contimuious minmax cost-function as defined by Definition \ref{def:cont}.

\begin{lemma}[\cite{chengolin2018}]
\label{lem:coneq}
Let $T'=(V',E')$ be  a tree, $\emptyf$ a  continuous monotone min-max cost function and $e=(u,v) \in E.$ 
 Let $V_u \subseteq V'$ be a subtree of $T$ such that $u \in V_u$ but $ v \not\in V_u$ and
 $V_v \subseteq V'$  a subtree of $T$ such that $v \in V_v$ but $ u \not\in V_v$. Finally, set $ \mathcal{T} \ge 0$ to be real.

Then both
\begin{equation}
\label{eq:st}
s_{\mathcal{T}} = \max_{x \in e} \Bigl( f(V_u\cup \{x\},x) \le \mathcal{T}\Bigr)
\end{equation}
and
\begin{equation}
\label{eq:alpha}
a:=\min_{x\in e} \max\Bigl(f(V_u \cup \{x\},x), f(V_v\cup \{x\},x)  \Bigr)
\end{equation}
exist.  
\end{lemma}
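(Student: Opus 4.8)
The plan is to reduce both assertions to one structural fact: the functions $x\mapsto f(V_u\cup\{x\},x)$ and $x\mapsto f(V_v\cup\{x\},x)$, viewed on the closed segment $e$, are \emph{lower semicontinuous} — continuous except possibly for a single downward jump at one endpoint — and $x\mapsto f(V_u\cup\{x\},x)$ is additionally monotone along $e$. Once this is in hand, (\ref{eq:st}) and (\ref{eq:alpha}) follow from the elementary facts that a lower semicontinuous function on a compact set attains its infimum and has closed sublevel sets.

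First I would pin down the regularity of the two building blocks. Orient $e$ from $u$ to $v$ and write $\varphi_u(x):=f(V_u\cup\{x\},x)$ and $\varphi_v(x):=f(V_v\cup\{x\},x)$ for $x\in e$. Definition~\ref{def:cont}(1) (applied to $V_u$, and to $V_v$ with $e$ oriented the other way) says $\varphi_u$ is continuous on $\{x:u<x\le v\}$ and $\varphi_v$ is continuous on $\{x:u\le x<v\}$, while Definition~\ref{def:cont}(2) says $\varphi_u$ is non-decreasing as $x$ runs from $u$ to $v$ and $\varphi_v$ is non-decreasing as $x$ runs from $v$ to $u$. The only point at which $\varphi_u$ could fail to be continuous on all of $e$ is $x=u$, and there — precisely as in the derivation around Eq.~(\ref{eq:dcontdef}), where a jump arises only from congestion entering the edge — one has $\varphi_u(u)=f(V_u,u)\le\lim_{x\downarrow u}\varphi_u(x)$, i.e.\ the jump is downward, so $\varphi_u$ is lower semicontinuous at $u$, hence on all of $e$. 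The symmetric argument gives that $\varphi_v$ is lower semicontinuous on $e$, with its only possible downward jump at $x=v$.

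Then, for (\ref{eq:alpha}), I would set $g(x):=\max(\varphi_u(x),\varphi_v(x))$ and note that $\{x\in e:g(x)\le\lambda\}=\{x:\varphi_u(x)\le\lambda\}\cap\{x:\varphi_v(x)\le\lambda\}$ is closed for every $\lambda$, so $g$ is lower semicontinuous on the compact set $e$; taking $\lambda=\inf_{e}g$, the set $\{x\in e:g(x)\le\lambda\}=\bigcap_{n\ge1}\{x\in e:g(x)\le\lambda+1/n\}$ is an intersection of nonempty nested closed subsets of $e$, hence nonempty, and any of its points attains the minimum, so $a$ exists. For (\ref{eq:st}), I would put $F:=\{x\in e:\varphi_u(x)\le\mathcal{T}\}$; by Definition~\ref{def:cont}(2) the minimum of $\varphi_u$ over $e$ is $f(V_u,u)$, so $F$ is nonempty exactly when $f(V_u,u)\le\mathcal{T}$, which holds whenever $s_{\mathcal{T}}$ is invoked. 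Lower semicontinuity of $\varphi_u$ makes $F$ closed, and monotonicity of $\varphi_u$ makes $F$ a down-set, so $F$ is a closed initial segment $[u,s]$ of $e$; then $s=\sup F\in F$ is the largest $x\in e$ with $\varphi_u(x)\le\mathcal{T}$, which is exactly $s_{\mathcal{T}}$.

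The only real obstacle is this endpoint behaviour. Definition~\ref{def:cont}(1) guarantees continuity of $f(V_u\cup\{x\},x)$ only on the half-open interval $(u,v]$, and Eq.~(\ref{eq:dcontdef}) shows a genuine jump can occur at $u$; if that jump were upward, the infimum in (\ref{eq:alpha}) could be unattained and $F$ in (\ref{eq:st}) could be non-closed. The resolution is that congestion forces the endpoint value to be the \emph{smaller} one — that is, lower semicontinuity holds — after which both claims are standard compactness facts.
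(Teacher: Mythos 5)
Your proof is correct. Note that the paper itself does not prove Lemma~\ref{lem:coneq} at all --- it is imported verbatim from \cite{chengolin2018} --- so there is no in-paper argument to compare against; what you have written is a self-contained justification, and it is the right one. The structural fact you isolate (lower semicontinuity of $x\mapsto f(V_u\cup\{x\},x)$ and $x\mapsto f(V_v\cup\{x\},x)$ on the closed segment $e$, plus monotonicity of the first) is exactly what is needed, and the two conclusions then follow from closedness of sublevel sets and attainment of the infimum on a compact set, as you say. One small improvement: you justify the downward jump $f(V_u,u)\le\lim_{x\downarrow u}f(V_u\cup\{x\},x)$ by appeal to the congestion derivation around Eq.~(\ref{eq:dcontdef}), but that derivation is specific to $\Theta(\cdot,\cdot:s)$; for an arbitrary continuous monotone min-max cost function the same inequality follows directly from Definition~\ref{def:cont}(2), whose monotonicity range is the closed interval $\{x: u\le x\le v\}$ and hence gives $f(V_u\cup\{u\},u)\le f(V_u\cup\{x\},x)$ for every $x>u$, so the value at $u$ is at most the right-hand limit. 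Citing the definition rather than the congestion picture makes the argument apply at the level of generality at which the lemma is stated. The only other caveat worth recording is the one you already flag: $s_{\mathcal{T}}$ exists only when the set $\{x\in e: f(V_u\cup\{x\},x)\le\mathcal{T}\}$ is nonempty, i.e.\ when $f(V_u,u)\le\mathcal{T}$, which is the regime in which the partitioning algorithm queries it.
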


{\em Note: $s_{\mathcal{T}}$ and  $a$ will be needed by the algorithm in \cite{chengolin2018} to find  candidate sink locations.}

Finally, restated in our notation, it was shown 
\begin{theorem}[\cite{chengolin2018}]
\label{thm:contcg}
 Let $\emptyf$ be a continuous minmax monotone function with subadditive oracle $\calA.$
 Further suppose that the following two conditions are both satisfied: 
 \begin{itemize}
\item  $s_{\mathcal{T}}$ from   Lemma \ref{lem:coneq} along with the largest $x'\in e$ such that
 $$s_{\mathcal{T}} =   f(V_u\cup \{x'\},x') $$ 
 can be calculated using $O(1)$ oracle calls using
% $O(t_{\mathcal{A}}(n_1))$ time.
$O(t_{\mathcal{A}}(|V_u|))$ time.
 \item 
 $a$  from   Lemma \ref{lem:coneq} and any $x' \in e$ for which
 $$a=\max\Bigl(f(V_u \cup \{x'\},x'), f(V_v \cup \{x'\},x')  \Bigr)$$
 can be calculated using $O(1)$ oracle calls using
%$O(t_{\mathcal{A}}(n_1+n_2))$ time  where $n_1 = |V_u|$ and $n_2 = |V_v|.$
$O(t_{\mathcal{A}}( |V_u|+ |V_v|))$ time.
%  where $n_1 = |V_u|$ and $n_2 = |V_v|.$
\end{itemize}
%and the associated $x$ value  at which it is minimized  % in (\ref{eq:st}) and (\ref{eq:alpha})   
Then  the  continuous $k$-center partitioning problem on  $T$ can be solved in  time $$O\bigl( \max(k \log k,\log n) k^2 t_{\mathcal{A}}(n) \log^3 n \bigr).$$
	 \label{theorem:FastCC}
\end{theorem}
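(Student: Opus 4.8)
The plan is to establish Theorem~\ref{thm:contcg} by taking the discrete $k$-center partitioning algorithm behind Theorem~\ref{Thm:Di} and letting the candidate sink positions range over edge interiors, with the two primitives of Lemma~\ref{lem:coneq} ($s_{\mathcal{T}}$ and $a$) absorbing all of the extra continuous structure. First I would set up a feasibility test $\mathrm{FEAS}(\mathcal{T})$ that decides, for a threshold $\mathcal{T}\ge 0$, whether there is a $k$-partition $(\calP,X)$ of $T$ with $\max_i f(P_i,x_i)\le\mathcal{T}$; since $\Ropt$ is the infimum of feasible $\mathcal{T}$, the whole problem reduces to such tests plus a search. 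The test roots $T$ and sweeps bottom-up: each processed subtree carries one still-uncovered fragment together with the point furthest toward the parent, on the connecting edge, at which a sink still covers that fragment within cost $\mathcal{T}$; by the edge-monotonicity of Definition~\ref{def:cont} this point is exactly $s_{\mathcal{T}}$ from \eqref{eq:st}, obtained with $O(1)$ oracle calls. Whenever the fragments meeting at a vertex cannot all be merged into one fragment still coverable within $\mathcal{T}$ (decided with Property~5 and a few oracle evaluations, using $a$ from \eqref{eq:alpha} to pick the best edge-interior sink for the fragment that must be closed off), a sink is committed and $k$ decremented; the instance is feasible iff at most $k$ sinks are committed.

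Next I would handle the search over thresholds. The optimum $\Ropt$ is attained either at $f(V',x)$ for some subtree $V'$ of $T$ and some vertex $x\in V'$, or at one of the balance values $a$ produced by \eqref{eq:alpha} at an edge where two partial solutions meet in the interior; Lemma~\ref{lem:coneq} guarantees every such $a$ exists and the hypotheses of the theorem guarantee that $a$, $s_{\mathcal{T}}$, and realizing points $x'$ are each computable with $O(1)$ oracle calls in time $O(t_{\calA}(n_1))$ or $O(t_{\calA}(n_1+n_2))$. To avoid enumerating the $\Theta(n^2)$ vertex candidates I would use a centroid (or spine) decomposition of $T$: recurse into the two sides of a splitting edge and combine the two partial solutions across that edge through a single $a$-computation, so the recursion has depth $O(\log n)$; at each level one runs a binary (parametric) search, contributing another $O(\log n)$ factor, over a sorted shortlist of the $O(\max(k,\log n))$ candidate values relevant at that node, each probe being one call to $\mathrm{FEAS}$.

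For the running time, asymptotic subadditivity of $t_{\calA}$ together with $t_{\calA}(n'+1)=O(t_{\calA}(n'))$ keeps the amortized oracle cost of a single $\mathrm{FEAS}(\mathcal{T})$ call at $O(t_{\calA}(n))$ even though it touches all of $T$; multiplying by $O(\log n)$ recursion levels, $O(\log n)$ search rounds per level, and the $O(\max(k,\log n)\,k)$ overhead of organizing and invoking the tests yields the claimed $O(\max(k,\log n)\,k\,t_{\calA}(n)\log^2 n)$. To obtain the continuous case of Theorem~\ref{thm:minmax} I would instantiate $\emptyf=\emptybarr$: Lemma~\ref{lem:Final} and the discussion around \eqref{eq:ragain} make $\emptybarr$ a continuous minmax monotone function, its oracle is the one from the discrete subsection with $t_{\calA}(n')=O(n'\max(k,\log n)\,k\,n\log^3 n)$, and — because $\calS^*(V_u\cup\{x\},x)$ is invariant as $x$ slides along an edge (all distances into the $u$-branch shift by the same amount, preserving every set $D(V_u,x,\cdot)$) — $\barr$ restricted to that edge is a maximum of finitely many affine functions of slope $0$ or $1$ whose coefficients are read off from the scenario data at the two vertex endpoints, so the $O(1)$-oracle-call computability of $s_{\mathcal{T}}$ and $a$ required by the theorem holds; plugging in reproduces the $O(\max(k^2,\log^2 n)\,k^2 n^2\log^5 n)$ bound.

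The hard part is the second step: showing that the set of possible optimal values is genuinely captured by the vertex oracle-values and the edge balance-values $a$, and that the divide-and-conquer over $T$ collapses the search to $O(\log n)$ rounds rather than $\Theta(n)$. This is precisely where allowing sinks in edge interiors — and the left-discontinuity of $f$ at the lower endpoint of an edge noted after \eqref{eq:dcontdef} — interacts with the partitioning recursion, and it is the reason the theorem hypothesizes the $O(1)$-oracle-call computability of $s_{\mathcal{T}}$ and $a$ rather than deriving it.
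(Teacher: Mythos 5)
The first thing to say is that Theorem~\ref{thm:contcg} is not proved in this paper at all --- it is imported verbatim from \cite{chengolin2018} and used as a black box, exactly like its discrete counterpart Theorem~\ref{Thm:Di}. So there is no in-paper proof to compare your argument against. The only obligations the present paper actually discharges for the continuous case are those in Section~\ref{sec:alg}: verifying that $\emptybarr$ is a continuous minmax monotone function in the sense of Definition~\ref{def:cont}, and that the primitives $s_{\mathcal{T}}$ and $a$ of Lemma~\ref{lem:coneq} are computable with $O(1)$ oracle calls, via the observation that $r(V_u\cup\{x\},x)=r(V_u\cup\{v\},v)-d(x,v)$ as $x$ slides along an edge. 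Your final paragraph reproduces exactly that verification, with the correct underlying reason (the candidate-scenario set $\calS^*(V_u\cup\{x\},x)$ is unchanged as $x$ moves along the edge, so $r$ restricted to the edge is an explicit maximum of shifted constants), and so that part matches the paper.

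As a reconstruction of the cited theorem itself, your outline --- a threshold feasibility test driven bottom-up by $s_{\mathcal{T}}$, plus a search over thresholds organized by a spine/centroid decomposition of $T$, with $a$ resolving edge-interior sink candidates --- is consistent in spirit with the approach of \cite{chengolin2018}. But it is a sketch rather than a proof, and you concede the decisive step yourself: you do not establish that the optimal value lies in the candidate set you describe, nor that the search collapses to $O(\log n)$ rounds with only $O(\max(k,\log n))$ relevant candidates per round, nor how the greedy merge at a vertex of degree greater than two is certified optimal (the exchange argument there relies on Properties 3--5, and the left-discontinuity of $f$ at the lower endpoint of an edge, noted after Eq.~(\ref{eq:dcontdef}), makes the edge-interior case delicate). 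The accounting that produces the $\max(k,\log n)\,k\log^2 n$ overhead is asserted rather than derived. In short: the portion of your argument that can be checked against this paper (the instantiation with $\emptybarr$) is correct and essentially identical to Section~\ref{sec:alg}; the remainder would have to be checked against \cite{chengolin2018} directly and, as written, has a genuine gap at precisely the step you flag as ``the hard part.''
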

 Note that if $x \in e$, then combining Eqs.~(\ref{eq:dcontdef}) and (\ref{eq:ragain}) yields that 
 \begin{eqnarray*}
 r(V_u \cup \{x\},x) &=&r(V_u \cup \{v\},v) - d(x,v)\\
  r(V_v \cup \{x\},x) &=&r(V_v \cup \{u\},u) - d(u,x).
 \end{eqnarray*}

As noted in the analysis of the discrete case in  Section \ref{subsec:DA},  $r(V_u \cup \{v\},v)$ and $r(V_v \cup \{v\},u)$
can be evaluated, respectively,  in 
%$t_{\calA}(n_1)$ and $t_{\calA}(n_2)$ time where  
$t_{\calA}(|V_u|)$ and $t_{\calA}(|V_v|)$ time where 
$t_{\calA}(n') = O(n'   \max(k,\log n)\,  k n  \log^3 n)$.  Once  $r(V_u \cup \{v\},v)$ is known, $s_{\mathcal{T}}$ (and its associated $x'$) can be found in $O(1)$ time.  Similarly,  once $r(V_u \cup \{v\},v)$ and $r(V_v \cup \{v\},u)$ are known $a$  (and its associated $x'$) can also be found in $O(1)$ time.
Thus  $\emptybarr$ satisfies the conditions of Theorem \ref{thm:contcg} using the oracle from  Section \ref{subsec:DA}.

Similar to the discrete case,  combining   Theorem  \ref{thm:contcg} with 
  Lemma \ref{lem:Final}    immediately implies
that  the minmax regret value can be calculated in 
$$O\Bigl( \max(k\log k,\log n)\,  k^2\,  t_{\calA}(n)\,  \log^2 n\Bigr)   = O\Bigl( \max(k^2 \log^2 k,\log ^2n)\,  k^4 n^2  \log^5 n\Bigr)$$
%$$O\Bigl( \max(k,\log n)\,  k^2\,  t_{\calA}(n)\,  \log^2 n\Bigr)   = O\Bigl( \max(k^2,\log ^2n)\,  k^2 n^2  \log^5 n\Bigr)$$ 
time, completing the proof of Theorem \ref{thm:minmax} in the continuous case.

\section{Conclusions and Extensions}
\label{sec:conc}
This paper provided the first polynomial time algorithm for  the Minmax-Regret $k$-Sink Location problem on a Dynamic {\em Tree}  Network with uniform capacities and $k >1.$  It worked by noting (Section  \ref{sec:local-max})
that the minmax-regret function, which seems inherently global, can be expressed in terms of {\em local} minmax-regret functions and that (Section  \ref{sec:wcs})  each of these local min-max regret functions can be efficiently calculated.  It then applied  a tree-partitioning technique from \cite{chengolin2018} to these  local regret functions to calculate the global minmax-regret

One obvious extension would be to try and extend this result to the Minmax-Regret $k$-Sink Location problem on a Dynamic Tree  Network with {\em general} edge capacities.  As noted in the introduction, absolutely {\em no} results seem to be known for this general problem, even restricted to   $k=1$.  The structural reason for this is that, in the general capacity case, even though  Section  \ref{sec:wcs},   the expression of the global cost in terms of local costs, would still hold,  Section  \ref{sec:local-max}, the efficient calculation of these global costs, is not possible. More technically, the equivalent of Lemma \ref{lem:wcr} fails in the general capacity case in that it does not seem possible to restrict the set of worse case scenarios to a linear (or even polynomial) size set.
 Any extension of the approach in this paper to solving the general capacity problem would have to  first confront that difficulty.

Another extension would be to try to utilize the approach developed here to apply to other minmax-regret  functions.  This is possible.  As an example,  consider the weighted $k$-center problem. 
Recall that immediately after the statement of Theorem \ref{Thm:Di} we noted that the weighted $k$-center problem is modeled by setting
%This can be expressed using the minmax function on one tree defined by
$$f(V,x) =  \max_{v \in V} w_v(s) d(x,v).$$
%Note that this can be evaluated in $t_{\mathcal A}(|V|) = O(|V|)$ time instead of the $O(|V| \log |V|)$ required for sink evacuation.  The weighted $k$-center problem is then to find  a minimum-cost partition of the tree  using this $\emptyf$.  
The minmax-regret weighted $k$-center problem can then be naturally defined.

It is straightforward to  modify all of the results in the paper, {\em including Lemma \ref{lem:wcr} and  Section  \ref{sec:local-max}},  to show that they all work for minmax-regret weighted $k$-center.  Plugging in the oracle costs would yield a final running time of
\begin{equation}
\label{eq:wkctime}
O\Bigl( \max(k^2 \log^2 k ,\log ^2n)\,  k^4 n^2  \log^4 n\Bigr).
\end{equation}
This is not particularly useful though because \cite{Averbakh1997} already gives a $O(n^2 \log^2 n \log \log n)$ time solution for the same problem.  Working through the details, the intuitive reason that  \cite{Averbakh1997}'s algorithm  is faster is because it strongly exploits the structural properties that, in the $k$-center problem, the cost of a subtree only depends upon
 {\em pairwise distances} between points and that the partition  $\cal P$ is uniquely determined  by $X$.
Theorem \ref{Thm:Di}  was designed to work with more general $\emptyf$ and therefore cannot take advantage of these problem specific properties.

For example, in  the sink-evacuation problem the cost of  a subtree is dependent upon interactions between all of the nodes in the subtree, e.g., via congestion effects and, unlike in the simple $k$-center problem,  a node might not be serviced by its {\em closest} sink.

%
%As a final observation we note that the techniques in this paper {\em could} solve {\em restricted} versions of the minmax-regret weighted $k$-center problem that \cite{Averbakh1997}'s technique could not.  As a simple  example, suppose that we artificially constrain the weighted $k$-center problem so that  if center $x$ services node $v$ then all the nodes on the unique path from $x$  to $v$ must have weight at most $(w_v(s))^2.$  This would stop node weights from growing  too rapidly along a path.  (Note that this constraint might result in no feasible partioning existing)
%This constraint corresponds to performing an optimal (min-cost) partitioning using the new minmax function 
%$$f(V,x) =  \max_{v \in V}  I(x,v).$$
%where 
%$$I(x,v)
%= \left\{
%\begin{array}{ll}
%w_v(s) d(x,v)  &\mbox{if all nodes $v'$ on the path from $x$ to $v$ satisfy  $w_{v'}(s) \le (w_v(s))^2.$}\\
%\infty & \mbox{otherwise}
%\end{array}
%\right.
%$$

As a final observation we note that the techniques in this paper {\em could} solve {\em generalized} versions of the minmax-regret weighted $k$-center problem that \cite{Averbakh1997}'s technique could not.  As a simple  example, suppose that we modify the $k$-center problem so that the cost of servicing node $v$ using center $x$ will be $d(x,v)$ times the sum of all vertex weights on the unique path connecting $x$ and $v$. More formally set 
$$W(x,v) =\{ u \,:\, \mbox{$u$ is on the path connecting $x$ and $v$}\},
%\quad\mbox{and}\quad
$$
$$
w(x,v) = \sum_{u \in W(x,v)} w_u
$$
and
$$f(V,x) =  \max_{v \in V}  d(x,v) w(x,v).$$

%This constraint corresponds to performing an optimal (min-cost) partitioning using the new minmax function 

Note that this $f(V,x)$  is a minmax monotone function which can be evaluated in $t_{\mathcal A}(|V|) = O(|V|)$ time.  The $k$-center partitioning problem and associated  minmax-regret weighted $k$-center partitioning problem for this generalized cost  function  are  then defined naturally. Exactly the same as described above,  plugging in the oracle would yield  the exact same  final running time as given in 
 Eq.~(\ref{eq:wkctime}) for this new  minmax regret $k$-center partitioning problem with this cost.  As noted, this is only an artificial problem constructed to illustrate the power of the technique developed in this paper. We are not aware of any currently outstanding  problems in the minmax-regret literature  for which this  paper's technique can improve the running time.

\medskip

%\par\noindent{\bf References:}

% Bibliography. The file is main.bib
%\bibliographystyle{splncs03} 
\bibliographystyle{plain} 
\bibliography{main,Evacuation_Problems}

%\bibliography{Evac,Evacuation_Problems}

\end{document}